\newtheorem{theorem}{Theorem}
\newtheorem{lemma}[theorem]{Lemma}
\newtheorem{proposition}[theorem]{Proposition}
\begin{document}
\title{On Merton's Optimal Portfolio Problem with Sporadic Bankruptcy for Isoelastic Utility}

\author{Yaacov Kopeliovich \and Michael Pokojovy \and Julia Bernatska}
\address{University of Connecticut, Finance Department, 2100 Hillside Road, Storrs, CT 06269 \and
Old Dominion University, Department of Mathematics and Statistics, 4300 Elkhorn Ave, Norfolk, Virginia 23529 \and
University of Connecticut, Mathematics Department, 1 University Place, Stamford, Connecticut 06901, USA}
\email{mpokojovy@odu.edu}
\date{\today}

\thanks{
Helpful discussions with Professor Ramon van Handel on the verification theorem for classical Merton problem with logarithmic utility are greatly appreciated.}

\allowdisplaybreaks[4]
\begin{abstract}
    We consider a stock that follows a geometric Brownian motion (GBM) and a riskless asset continuously compounded at a constant rate. We assume that the stock can go bankrupt, i.e., lose all of its value, at some exogenous random time (independent of the stock price) modeled as the first arrival time of a homogeneous Poisson process. For this setup, we study Merton's optimal portfolio problem consisting in maximizing the expected isoelastic utility of the total wealth at a given finite maturity time. We obtain an analytical solution using coupled Hamilton-Jacobi-Bellman (HJB) equations. The optimal strategy bans borrowing and never allocates more wealth into the stock than the classical Merton ratio recommends. For non-logarithmic isoelastic utilities, the optimal weights are non-myopic. This is an example where a realistic problem, being merely a slight modification of the usual GBM model, leads to non-myopic weights. For logarithmic utility, we additionally present an alternative derivation using a stochastic integral and verify that the weights obtained are identical to our first approach. We also present an example for our strategy applied to a stock with non-zero bankruptcy probability. \\[0.1in]
    \textbf{Keywords:} Intertemporal portfolio optimization; Markov-switching processes; absorbing processes; Hamilton-Jacobi-Bellman equation; non-myopic strategies.
\end{abstract}

\maketitle

\section{Introduction}

On 3/10/2023, Silicon Valley Bank, one of the largest banks in Silicon Valley, collapsed after a bank run. The collapse sent a shock wave through the financial services in Silicon Valley. This was not the only event of its kind. During the 2007-2008 financial crisis, bank failures were quiet frequent, the most famous ones being the collapse of Lehman Brothers and Bear Stearns that was sold to JPMorgan Chase in March 2008. Such collapses prompt a natural question whether the possibility of a crash can be incorporated into portfolio allocation decisions, and that is what lies at the heart of this paper, i.e., we investigate the dynamic optimal allocation problem in a situation where security prices are subject to exogenous events like bankruptcy. Recall, the dynamic optimal portfolio problem was first considered by~\cite{Mer1969} for multiple stocks that follow a multi-dimensional geometric Brownian motion~\citep{PlaRe2009}.

For a single stock, whose price $S_{t}$ at time $t$ is modeled as a Markovian diffusion process that solves the stochastic differential equation
\begin{equation}
    \label{EQUATION:STOCK_DYNAMICS_GBM}
    \mathrm{d} S_{t} = \mu S_{t} \mathrm{d}t + \sigma S_{t} \mathrm{d}B_t
\end{equation}
with a standard Wiener process $(B_t)_{t \geq 0}$,
while the risk-free asset follows the ordinary differential equation
\begin{equation}
    \label{EQUATION:BOND}
    \mathrm{d} R_{t} = r R_{t}
\end{equation}
with a constant risk-free rate $r$.
Aiming to maximize the expected isoelastic utility
\begin{equation}
    \label{EQUATION:ISOELASTIC_UTILITY_MERTON}
    U(W) =
    \begin{cases}
        \frac{W^{1 - \gamma} - 1}{1 - \gamma}, & \gamma \geq 0, \gamma \neq 1, \\
        \log(W), & \gamma = 1
    \end{cases}
\end{equation}
of the terminal wealth for given $\gamma > 0$, \cite{Mer1969} derived his Merton ratio $\pi$ of the wealth to invest in stock
\begin{equation}
    \label{EQUATION:MERTON_RATIO_CLASSICAL}
    \pi = \frac{\mu - r}{\gamma \sigma^2},
\end{equation}
while the remaining fraction, $(1 - \pi)$, of investor's wealth at each time $t$ is invested in the risk-free asset.
The theoretical importance of Merton ratio is that it gives investors a simple, yet rigorous rule how to allocate capital into the risky asset as to maximize the expected utility, while the practical importance is highlighted in~\cite{Bu} article.

Since the groundbreaking work of~\cite{Mer1969}, many authors investigated various variants of Merton optimal portfolio problem. For example, \cite{Wa2002} considered the optimal portfolio problem for risky assets following a mean reverting processes and obtained an exact solution for the optimal allocation when markets are complete. \cite{Mer1973} himself applied his dynamic portfolio optimization approach to extend the usual capital asset pricing model (CAPM) to an intertemporal CAPM. The latter generalizes the usual CAPM model developed by Sharpe based on Markowitz portfolio problem. Fixed income securities were incorporated as well. For instance, \cite{BSL1997} considered a Merton-type problem for fixed income securities and stocks but without any bankruptcy assumptions on the bonds. \cite{KLSS1985} considered the optimal portfolio problem for a general utility function and incorporated bankruptcy events. The bankruptcy event considered in the latter paper was not modeled as an exogenous event but was rather assumed to arise from the self-exciting (autoregressive) dynamics, i.e., the authors asked the question when it will be optimal for the security to default given an infinite time horizon and solved this problem of optimal bankruptcy. For applications in corporate finance, see \cite{Le1994, LeGo2001}.

The goal of this paper is to revisit Merton's original problem and investigate the optimal allocation between stock and cash as to optimize the expected power utility at a finite maturity $T$ assuming, however, the stock can sporadically jump into bankruptcy at an exogenous random time $\tau \sim \mathrm{Exp}(\lambda)$, independent of the stock price, which is modeled as an exponential random variable with some parameter $\lambda > 0$ corresponding to default intensity. This feature is ubiquitous when investing in stocks of high-yield companies that have non-zero bankruptcy probability at or before a finite maturity time. In fact, we were led to analyze this problem following a discussion with a hedge fund whose specialty was to invest in such securities. As we have two regimes in our dynamical allocation problem, we propose a relatively new technique to address such problems by breaking down the Bellman problem into pre- and post-default regimes. Modifying the It\^{o}'s lemma for such occasion, we obtain an HJB system of two coupled PDEs that we are able to solve analytically. To the best of our knowledge, this is the first time that an analytical solution was obtained for problem involving stocks with a possibility of bankruptcy over a finite time horizon.

The situation we consider is principally different from the usual Merton's problem. One significant difference is that a stock with non-zero bankruptcy probability necessitates that the allocation weight to this stock is always less than 100\% and, thus, borrowing is not allowed for such stocks. This is an interesting feature which is a by-product of our model and is not present in other Merton-type problems encountered in the literature. Another important difference is that our analytical solution for the weights implies non-myopic weights prior to the bankruptcy for any isoelastic utility except for the logarithmic case. In the latter scenario, our solution reduces to a piecewise constant weight given by the formula
\begin{equation}
    \label{EQUATION:MERTON_RATIO_OUR}
    \pi_{t} = \left\{
    \begin{array}{cl}
        \pi_{\mathrm{pre}}^{\ast}, & t < \tau \\
        0, & t \geq \tau
    \end{array}
    \right.
    \quad \text{ with } \quad
    \pi_{\mathrm{pre}}^{\ast} = \frac{\mu - r + \sigma^2 - \sqrt{\left(\sigma^2 - (\mu - r)\right)^2 + 4\lambda\sigma^2}}{2\sigma^2},
\end{equation}
where the pre-default weight $\pi_{\mathrm{pre}}^{\ast}$ neither depends on time nor wealth.
Further, $\pi_{\mathrm{pre}}^{\ast}$ never exceeds the classical Merton ratio, which is also the cae for non-logarithmic utilities. As mentioned above, the investor is discouraged from allocating all wealth into stock or borrowing money to purchase stock. This qualitative behavior is intuitively expected in view of a non-zero default probability.

The method we employ appears to be amenable to generalizations for portfolio selection with regime-switching~\citep{AzPiWe2014, WeShZh2020} with or without jumps governed by semi-Markov processes~\cite[Chapter~7]{Me1984}. These are a natural extension of the market model we examine here. However, we chose to focus on a relatively simple ``prototype'' model to exemplify how our approach works instead of going for redundant technical complication.
A similar problem was recently studied in a different context with engineering applications~\citep{LuLiMaGoNa2023}. However, the authors neither presented an analytical solution nor allowed for random regime changes. Turning back to finance literature, our work also has connection to portfolio optimization driven by a Brownian motion perturbed by jump processes. For example, \cite{EK} investigated optimal portfolios under Lev\'{y} innovations in the Markowitz framework of mean variance.
Their optimization approach followed the classical Markowitz approach and not the Merton approach we present here. Another distinction is that jumps under a Lev\'{y} process have an emerging property while stocks (contrary to bonds) do not recover from bankruptcy.

A dynamic portfolio optimization problem for a stochastic model with jumps was also considered by~\cite{F2014}. Like other works, the processes adopted were emerging processes with regime switching regulated by a Markov control process, which is separate from our case, where the process enters an absorbing state. Another significant work in this direction is due to~\cite{SH2015}, who consider asset allocation when the stock follows a geometric Brownian motion and a jump process. Even though the stochastic processes considered are complicated, the authors were able to obtain an exact solution to the Merton allocation problem for logarithmic utility. The analytical solution they have obtained assumes an infinite time horizon and not the finite maturity we assume in this work.

On the bond side, this type of problem was first studied by~\cite{BJ}. The framework they considered was an optimization problem for stock cash and a bond with some default probability. The authors formulated the underlying optimization problem and presented a solution but omitted the final derivation of the HJB equation or a stochastic integral we employ in our paper.
The issue was later resolved by \cite{CFL} and \cite{CL}, who studied an optimal allocation problem for a bond with bankruptcy and a stock using a method similar to ours. But the problem we discuss is different as we consider stocks with exogenous bankruptcy occurrence.

Our paper is structured as follows. In Section~\ref{SECTION:PRELIMINARIES}, we introduce a probabilistic framework and give a problem statement. In Section~\ref{SECTION:HEURISTIC_DERIVATION}, a simple heuristic calculation is presented to derive the optimal stock wait for the Merton problem with bankruptcy.
Without being backed by a rigorous verification theorem, it is new and interesting on its own merit since it employs a non-conventional form of the It\^{o}'s lemma to formally derive the optimality conditions following the idea originally presented by~\cite{KOP2023}. Similar to \cite{AzPiWe2014, CFL}, a pair of coupled HJB equations is derived to find the optimal weights. The non-logarithmic case presents an interesting technical feature as we obtain an ODE for the weights directly and, after solving it analytically, we derive the value function. As mentioned above, the weight is non-myopic. We also derive the result for the logarithmic utility using an alternative stochastic integral technique.
In Section~\ref{SECTION:RIGOROUS_DERIVATION}, a solution to the optimal portfolio problem is obtained by reducing the expected terminal logarithmic utility objective for a stock with bankruptcy to a combination of exponentially weighted expected terminal and running logarithmic utilities without bankruptcy. The optimal Merton ratio (viz.~Equation~\eqref{EQUATION:MERTON_RATIO_OUR}) obtained with both approaches agree. In Section~\ref{SECTION:EXAMPLE}, we illustrate our approach by computing the optimal allocation for a real stock with a non-trivial bankruptcy probability and confirm a material difference from the allocation produced with the usual Merton ratio. Proofs of auxiliary results are relegated to Appendix~\ref{APPENDIX:AUXILIARY_RESULTS}.

\section{Probabilistic Setup and Preliminaries}
\label{SECTION:PRELIMINARIES}

At time $t \geq 0$, let $S_{t}$ and $R_{t}$ denote the price of a defaultable stock and a riskfree asset, respectively.
Until a default time $\tau$, $S_{t}$ follows a GBM model with drift $\mu$ and volatility $\sigma > 0$, driven by a standard Brownian motion $(B_{t})_{t \geq 0}$, while $R_{t}$ is continuously compounded at a constant rate $r > 0$.
Assuming $W_{t}$ is the total wealth at time $t$ consisting of $\pi_{t} W_{t}$ dollars in stock and $(1 - \pi_{t}) X_{t}$ dollars in the riskless asset for some $\pi_{t}$, the dynamics of $X_{t}$ is given by equation
\begin{equation}
    \label{EQUATION:SDE_WEALTH_WITH_BANKRUPTCY}
    \begin{split}
        \mathrm{d}W_{t} &=
        \left\{
        \begin{array}{cl}
            \big(\mu \pi_{t-} + r (1 - \pi_{t-})\big) W_{t-} \mathrm{d}t + \sigma \pi_{t-} W_{t-} \mathrm{d}B_{t} &\quad\text{for } 0 \leq t < \tau, \\
            r(1 - \pi_{t}) W_{t} &\quad\text{for } \tau \leq t \leq T,
        \end{array}
        \right. \\
        W_{\tau} &= (1 - \pi_{\tau-}) W_{\tau-} \quad \text{if } \tau \geq T, \\
        W_{0} &= w
    \end{split}
\end{equation}
where $w > 0$ is a (given) initial wealth.
Introducing the jump operator~\citep{LuLiMaGoNa2023}
\begin{equation}
    \label{EQUATION:JUMP_OPERATOR}
    \Delta W_{t} := W_{t} - W_{t-}
\end{equation}
and assuming $W_{t}$ and $\pi_{t}$ are c\'{a}dl\'{a}g, the jump condition in Equation~\eqref{EQUATION:SDE_WEALTH_WITH_BANKRUPTCY} is equivalent with
\begin{equation}
    \label{EQUATION:JUMP_CONDITION}
    \Delta W_{\tau} = -\pi_{t} W_{\tau-}.
\end{equation}
Recall that $\tau \sim \mathrm{Exp}(\tau)$ is assumed stochastically independent of $(B_{t})_{t \geq 0}$. To emphasize the dependence of $W_{t}$ on $\pi_{t}$ and $\tau$, we will often write $W_{t}^{\pi, \tau}$.

Consider the semi-Markov process
\begin{equation}
    \label{EQUATION:SEMIMARKOV_SWITCHING_PROCESS}
    \xi_{t} := \mathds{1}_{\{t < \tau\}}
\end{equation}
with the non-absorbing pre-default state $1$ and the absorbing post-default state $0$. Introducing the drift and volatility functions
\begin{equation}
    \begin{split}
        a(t, W, \pi, \xi) &= \big(\mu \pi + r (1 - \pi)\big) \xi +  r (1 - \pi) (1 - \xi), \\
        b(t, W, \pi, \xi) &= \sigma \pi \xi,
    \end{split}
\end{equation}
Equation~\eqref{EQUATION:SDE_WEALTH_WITH_BANKRUPTCY} can be expressed as
\begin{equation}
    \label{EQUATION:SDE_WEALTH_WITH_BANKRUPTCY_TRANSFORMED}
    \begin{split}
        \mathrm{d}W_{t} &=
        a(t-, W_{t-}, \pi_{t-}, \xi_{t-}) \mathrm{d}t + b(t-, W_{t-}, \pi_{t-}, \xi_{t-}) \mathrm{d}B_{t} - \pi_{t-} W_{t-} \mathrm{d}H(t) \text{ for } 0 \leq t \leq T, \\
        W_{0} &= w.
    \end{split}
\end{equation}
with the Heaviside function $H(\cdot) = \mathds{1}_{(-\infty, 0]}(\cdot)$.
It should be emphasized that, unlike~\cite{LuLiMaGoNa2023}, the switching time in Equation~\eqref{EQUATION:SDE_WEALTH_WITH_BANKRUPTCY_TRANSFORMED} is random. Also, unlike~\cite{AzPiWe2014}, the solution process is not pathwise continuous.

Consider the filtered probability space $(\Omega, (\mathcal{F}_{t})_{t \geq 0}, \mathbb{P})$ with the minimal $\sigma$-algebra
\begin{equation}
    \label{EQUATION:SIGMA_ALGEBRA_B_T_AND_XI_T}
    \mathcal{F}_{t} = \sigma(B_{t}, \xi_{t}) \quad \text{ for } t \geq 0
\end{equation}
completed by null sets and assume $(\pi_{t})_{t \geq 0} $ is an adapted c\'{a}dl\'{a}g process with
\begin{equation*}
    \int_{0}^{T} \pi_{t}^{2} \mathrm{d}t < \infty \quad \mathbb{P}\text{-a.s.}
\end{equation*}
A strong solution to the jump SDE in Equation~\eqref{EQUATION:SDE_WEALTH_WITH_BANKRUPTCY_TRANSFORMED} is an adapted c\'{a}dl\'{a}g process $(W_{t})_{t \in [0, T]}$ that satisfies the integral equation
\begin{equation*}
    \begin{split}
        W_{t} &= w + \int_{0}^{t} a(s-, W_{s-}, \pi_{s-}, \xi_{s-}) \mathrm{d}s +
        \int_{0}^{t} b(s-, W_{s-}, \pi_{s-}, \xi_{s-}) \mathrm{d}B_{t}
        - \int_{0}^{t} \pi_{s-} W_{s-} \mathrm{d}H(s - \tau)
    \end{split}
\end{equation*}
where $H(\cdot) = \mathds{1}_{(-\infty, 0]}(\cdot)$ is the Heaviside function. While pathwise uniqueness follows with the standard ODE theory applied pre- and post-jump, the existence is given by Equation~\eqref{EQUATION:STATE_DEPENDENT_CONTROL} below.

Adopting an isoelastic utility function (viz.~Equation~\eqref{EQUATION:ISOELASTIC_UTILITY_MERTON}),
consider the terminal expected utility functional
\begin{equation}
    \label{EQUATION:EXPECTED_TERMINAL_UTILITY}
    J(\pi) = \mathbb{E}\big[U(W^{\pi, \tau}_{T})\big].
\end{equation}
Similar to~\cite{AlPe2021} and \cite{F2014}, the set of admissible controls for Equation~\eqref{EQUATION:SDE_WEALTH_WITH_BANKRUPTCY_TRANSFORMED} can be defined as
\begin{equation}
    \label{EQUATION:ADMISSIBLE_CONTROLS_ORIGINAL}
    \begin{split}
        \mathcal{A}_{w} := \big\{\pi \,|\, &(\pi_{t})_{t \in [0, T]} \text{ is adapted with c\'{a}dl\'{a}g paths and } \int_{0}^{T} \pi_{t}^{2} \mathrm{d}s < \infty \; \mathbb{P}\text{-a.s.}, \\
        &(W_{t})_{t \in [0, T]} \text{ is a unique strong solution such that } W_{0}^{\pi, \tau} = w, \\
        &W_{t}^{\pi, \tau}(\omega) > 0 \; \mathbb{P} \otimes \nu \text{ a.e.~in } [0, T] \times \Omega, \quad
        \mathbb{E} \big[\mathop{\operatorname{ess\,sup}}_{t \in [0, T]} U(W_{t}^{\pi, \tau})_{-}\big] < \infty\big\},
        \end{split}
\end{equation}
where $x_{-} = \max\{0, -x\}$ denotes the negative part of $x$ and $\nu$ is the Borel measure on $[0, \infty)$.
Thus, the optimal portfolio problem can be stated as
\begin{equation}
    \label{EQUATION:OPTIMAL_PORTFOLIO_PROBLEM}
    J(\pi) \to \max_{\pi \in \mathcal{A}_{w}}, \quad
    \pi^{\ast} = \mathop{\operatorname{arg\,max}}_{\pi \in \mathcal{A}_{w}} J(\pi), \quad W^{\ast} = (W_{t}^{\pi^{\ast}, \tau})_{t \in [0, T]},
\end{equation}
where $\pi^{\ast}$ is the optimal allocation process and $W^{\ast}$ is the associated optimal wealth process.

To solve the optimal control problem, we first want to explicitly characterize the forward map $\pi \mapsto W^{\pi, \tau}$ mapping each admissible control $\pi \in \mathcal{A}_{w}$ to the unique strong solution $(W^{\pi, \tau}_{t})_{t \geq 0}$ of Equation~\eqref{EQUATION:SDE_WEALTH_WITH_BANKRUPTCY}. To this end, let $(X_{t})_{t \geq 0}$ denote the unique strong solution to the classical wealth SDE without bankruptcy
\begin{equation}
    \label{EQUATION:SDE_WEALTH}
    \mathrm{d}X_{t} = \big(\mu \pi_{t} + r (1 - \pi_{t})\big) X_{t} \mathrm{d}t + \sigma \pi_{t} X_{t} \mathrm{d}B_{t} \quad \text{ for } t > 0, \quad
    X_{0} = w
\end{equation}
explicitly given~\cite[Chapter~4]{KlPl1992} via
\begin{equation}
    \label{EQUATION:X_T_PROCESS_EXPLICIT_SOLUTION}
    X_{t} =  \Phi_{t} w - \sigma \int_{0}^{t} \Phi_{t} \Phi_{s}^{-1} \pi_{s} \mathrm{d}s
\end{equation}
with
\begin{equation*}
    \Phi_{t} = \exp\Big(\int_{0}^{t} \Big(\big(\mu \pi_{s} + r (1 - \pi_{s})\big) - \frac{1}{2} \sigma^{2} \pi_{s}^{2}\Big) \mathrm{d}s + \sigma \int_{0}^{t} \pi_{s} \mathrm{d}B_{s}\Big).
\end{equation*}
With this notation, the wealth process $(W_{t})_{t \in [0, T]}$ solving Equation~\eqref{EQUATION:SDE_WEALTH_WITH_BANKRUPTCY} can be expressed as
\begin{equation}
    \label{EQUATION:ACTUAL_WEALTH}
    W_{t} = X_{t} \mathds{1}_{\{t < \tau\}} +
    \exp\Big(\int_{\tau}^{t} r(1 - \pi_{s}) \mathrm{d}s\Big) (1 - \pi_{\tau-}) X_{\tau} \mathds{1}_{\{t \geq \tau\}}.
\end{equation}
Indeed, if the bankruptcy has not happened at time $t$ or before, $W_{t}$ is same as $X_{t}$. Otherwise, all capital allocated in the stock at time $\tau-$ (viz.~$\pi_{\tau-} X_{\tau-}$) will be lost, while the remaining capital allocated in the riskless asset (viz.~$(1 - \pi_{\tau-}) X_{\tau-}$) will continue accruing from time $\tau$ to time $t$.

As can be seen from Equation~\eqref{EQUATION:ACTUAL_WEALTH}, the positivity of $W_{t} > 0$ $\mathbb{P} \otimes \nu$-a.e.~required in the definition of $\mathcal{A}_{w}$ in Equation~\eqref{EQUATION:ADMISSIBLE_CONTROLS_ORIGINAL} necessitates that
\begin{equation*}
    \pi_{t}(\omega) < 1 \text{ for } \mathbb{P} \otimes \nu \text{ for a.e. } (t, \omega) \in \Omega \times [0, T].
\end{equation*}
Indeed, letting $I = \{t \in [0, T] \,|\, \pi_{t} \geq 1\}$, we obtain
\begin{equation*}
    \begin{split}
        \mathbb{P}\{W_{t} \leq 0\} &= \mathbb{P}\Big\{\exp\Big(\int_{\tau}^{t} r(1 - \pi_{s}) \mathrm{d}s\Big) (1 - \pi_{\tau-}) X_{\tau} \mathds{1}_{\{T \geq \tau\}} \leq 0\Big\} \\
        &= \mathbb{P}\Big\{(1 - \pi_{\tau-}) X_{\tau} \mathds{1}_{\{\tau \in I\}} \leq 0\Big\} = \mathbb{P}\{\tau \in I\},
    \end{split}
\end{equation*}
where we used the fact that $X_{t} > 0$ $\mathbb{P}$-a.s.~for a.e.~$t \in [0, T]$. The latter probability vanishes if and only if $\nu(I) = 0$.

Thus, the admissible set in Equation~\eqref{EQUATION:ADMISSIBLE_CONTROLS_ORIGINAL} can be equivalently expressed as
\begin{equation}
    \label{EQUATION:ADMISSIBLE_CONTROLS_ORIGINAL}
    \begin{split}
        \mathcal{A}_{w} := \big\{\pi \,|\, &(\pi)_{t \in [0, T]} \text{ is adapted  with c\'{a}dl\'{a}g paths and } \int_{0}^{T} \pi_{t}^{2} \mathrm{d}t < \infty \; \mathbb{P}\text{-a.s.}, \\
        &\pi_{t}(\omega) < 1 \; \mathbb{P} \otimes \nu \text{ a.e.~in } \Omega \times [0, T], \\
        &(W_{t})_{t \in [0, T]} \text{ is a unique strong solution with } W_{0}^{\pi, \tau} = w, \\
        &\mathbb{E} \big[\mathop{\operatorname{ess\,sup}}_{t \in [0, T]} U(W_{t}^{\pi, \tau})_{-}\big] < \infty\big\}.
    \end{split}
\end{equation}

Analyzing Equation~\eqref{EQUATION:ACTUAL_WEALTH}, we easily observe that any wealth fraction $\pi_{t}$ invested into the stock after bankruptcy (i.e., $t \geq \tau$) in lieu of the riskless asset can only reduce the terminal wealth, $W_{T}$, and, thus, the expected utility $J(\cdot)$. Therefore, without loss of generality, we can assume $\pi_{t} = 0$ for $t \geq \tau$ leading to state-switching controls
\begin{equation}
    \label{EQUATION:STATE_DEPENDENT_CONTROL}
    \pi_{t} = \pi^{\ast}_{\mathrm{pre}} \mathds{1}_{\{t < \tau\}} + \pi^{\ast}_{\mathrm{post}} \mathds{1}_{\{t \geq \tau\}} =
    \left\{
    \begin{array}{ll}
        \pi_{\mathrm{pre}, t}, & t < \tau, \\
        \pi_{\mathrm{post}, t}, & t \geq \tau
    \end{array}
    \right. \quad \text{ with } \pi_{\mathrm{post}, t} \equiv 0.
\end{equation}
This furnishes a simplified version of Equation~\eqref{EQUATION:ACTUAL_WEALTH} given by
\begin{equation}
    \label{EQUATION:ACTUAL_WEALTH_REDUCED}
    W_{t} = X_{t} \mathds{1}_{\{t < \tau\}} + e^{r(t - \tau)} (1 - \pi_{\tau-}) X_{\tau} \mathds{1}_{\{t \geq \tau\}}
\end{equation}
without violating any admissibility conditions.

\section{Heuristic Derivation}
\label{SECTION:HEURISTIC_DERIVATION}

We start with a general isoelastic utility given in Equation~\eqref{EQUATION:ISOELASTIC_UTILITY_MERTON}. Define the set of admissible strategies
\begin{equation*}
    \begin{split}
        \mathcal{A}_{t, w} := \big\{\pi \,|\, &(\pi)_{s \in [t, T]} \text{ is adapted  with c\'{a}dl\'{a}g paths and } \int_{t}^{T} \pi_{s}^{2} \mathrm{d}s < \infty \\
        &\mathbb{P}\text{-a.s.}, \;
        \pi_{s}(\omega) < 1 \text{ for } \mathbb{P} \otimes \nu \text{ a.e.~in } \Omega \times [t, T], \\
        &(W_{s})_{s \in [t, T]} \text{ is a unique strong solution with } W_{t}^{\pi, \tau} = w, \\
        &\mathbb{E} \big[\mathop{\operatorname{ess\,sup}}_{s \in [t, T]} U(W_{s}^{\pi, \tau})_{-}\big] < \infty\big\},
    \end{split}
\end{equation*}
where $(W_{s}^{\pi, \tau})_{s \in [t, T]}$ solves the jump SDE in Equation~\eqref{EQUATION:SDE_WEALTH_WITH_BANKRUPTCY} over $[t, T]$. Further, consider the value function
\begin{equation}
     \label{EQUATION:VALUE_FUNCTION_HEURISTIC}
     V(t, w) = \sup_{\pi \in \mathcal{A}_{t, w}} \mathbb{E}\big[U(W^{\pi, \tau}_{T}) \,|\, W_{t}^{\pi, \tau} = w\big].
\end{equation}
In spirit of~\cite{AzPiWe2014}, we also introduce two auxiliary `conditional' value functions
\begin{equation}
    \begin{split}
        V^{\mathrm{pre}}(t, w) &= \sup_{\pi \in \mathcal{A}_{t, w}} \mathbb{E}\big[U(W^{\pi, \tau}_{T}) \,|\, W_{t}^{\pi, \tau} = w, t < \tau\big], \\
        V^{\mathrm{post}}(t, w) &= \sup_{\pi \in \mathcal{A}_{t, w}} \mathbb{E}\big[U(W^{\pi, \tau}_{T}) \,|\, W_{t}^{\pi, \tau} = w, t \geq \tau\big] = \mathbb{E}\big[U(W^{\pi = 0, \tau}_{T}) \,|\, W_{t}^{\pi, \tau} = w, t \geq \tau\big]
    \end{split}
\end{equation}
corresponding to the case no bankruptcy has occurred at or before time $t$ and the opposite case, respectively. It should be noted that this framework allows for default-specific optimal control processes, i.e., $\pi_{t} = \pi_{t}(\mathds{1}_{\{t < \tau\}})$.

Using the law of total probability, the total value function in Equation~\eqref{EQUATION:VALUE_FUNCTION_HEURISTIC} can be expressed as
\begin{equation}
    \label{EQUATION:VALUE_FUNCTION_HEURISTIC_IDENTITY}
    \begin{split}
        V(t, w) &= V^{\mathrm{pre}}(t, w) \, \mathbb{P}\{\tau < t\} + V^{\mathrm{post}}(t, w) \, \mathbb{P}\{\tau \geq t\} \\
        &= (1 - e^{-\lambda t}) V^{\mathrm{pre}}(t, w) + e^{-\lambda t} V^{\mathrm{post}}(t, w).
    \end{split}
\end{equation}

\subsection{Logarithmic Utility}
\label{SECTION:LOGARITHMIC_UTILITY}

Choosing $\gamma = 1$, Equation~\eqref{EQUATION:ISOELASTIC_UTILITY_MERTON} reduces to the logarithmic utility function
\begin{equation}
    \label{EQUATION:LOGARITHMIC_UTILITY}
    U(W) = \log(W).
\end{equation}
Arguing similar to Equation~\eqref{EQUATION:ACTUAL_WEALTH_REDUCED}, we can explicitly compute the `post' conditional value function
\begin{equation*}
    V^{\mathrm{post}}(t, w) = U\big(e^{r(T - t)} (1 - \pi_{t-}) w\big)
    = r(T - t) + \log(1 - \pi_{t-}) + \log(w).
\end{equation*}

As for the `pre' conditional value function, given the time of crash is distributed as the first arrival time of a homogeneous Poisson process, the conditional probability of no crash occurring in $[t, t + \mathrm{d}t]$ assuming no crash has occurred until time $t$ is $(1 - \lambda \mathrm{d}t) + o(\mathrm{d}t)$, while the conditional probability of a crash is $\lambda \mathrm{d}t + o(\mathrm{d}t)$. Thus, using an argumentation similar to \cite[Theorem~3.3]{AzPiWe2014}, although without having pathwise continuity of the controlled wealth process in our situation, we can heuristically derive an infinitesimal recursive formula for $V^{\mathrm{pre}}(t, w)$ reading as
\begin{equation*}
    \begin{split}
        \mathrm{d} V^{\mathrm{pre}}(t, w) &= (1-\lambda \mathrm{d}t) \mathrm{d} V^{\mathrm{pre}}(t, w)
        + \lambda \mathrm{d}t \big(V^{\mathrm{post}}(t, w) - V^{\mathrm{pre}}(t, w)\big) + o(\mathrm{d}t) \\
        &= (1-\lambda \mathrm{d}t) \mathrm{d} V^{\mathrm{pre}}(t, w)  \\
        &+ \lambda \mathrm{d}t \big(r(T - t) + \log(1 - \pi_{t-}) + \log(w) - V^{\mathrm{pre}}(t, w)\big) + o(\mathrm{d}t).
    \end{split}
\end{equation*}

Let $\pi_{t} \in \mathcal{A}_{t, w}$. For a wealth process $(W_{s}^{\pi, \tau})_{s \in [t, T]}$ solving the jump SDE in Equation~\eqref{EQUATION:SDE_WEALTH_WITH_BANKRUPTCY} subject to the initial condition $W_{t}^{\pi, \tau} = w$, we evaluate the differential $\mathrm{d}V^{\mathrm{pre}}(s, W_{s})$ for $t \leq s \leq T$. According to Equation~\eqref{EQUATION:SDE_WEALTH}, the wealth process satisfies
\begin{equation*}
    \begin{split}
         \mathrm{d}W_{s} &= W_{s}\left(\pi_{s} \mu \mathrm{d}s + \pi_{s} \sigma \mathrm{d}B_{s} + (1 - \pi_{s}) r \mathrm{d}s\right), \\
         (\mathrm{d}W_{s})^{2} &= W_{s}^{2} \pi_{s}^2 \sigma^{2} \mathrm{d}s.
    \end{split}
\end{equation*}
Assuming $V^{\mathrm{pre}}(\cdot)$ is a smooth function, It\^{o}'s rule (cf.~\cite[Theorem~3.5]{AzPiWe2014}) furnishes
\begin{equation}
    \label{ItoJump}
    \begin{split}
        \mathrm{d} V^{\mathrm{pre}}(t, w) &= \partial_{w} V^{\mathrm{pre}}(t, w) (w \left(\pi_{t} \mu \mathrm{d}t + (1 - \pi_{t}) r \mathrm{d}t\right) \\
        &+ \frac{1}{2} \partial_{w}^{2} V^{\mathrm{pre}}(t, w) w^{2} \pi_{t}^2 \sigma^{2} \mathrm{d}t
        + \partial_{t} V^{\mathrm{pre}}(t, w) \mathrm{d}t \\
        &+ \lambda \mathrm{d}t \left(r(T - t) + \log(1 - \pi_{t}) + \log(w) - V^{\mathrm{pre}}(t, w)\right),
    \end{split}
\end{equation}
where we assume the continuity of $\pi_{s}$ (and, therefore, that of $\pi_{\mathrm{pre}, s}$) for $t \leq s < \tau$ implying $\pi_{s-} = \pi_{s}$. Dividing by $\mathrm{d}t$ and passing to the limit $\mathrm{d}t \to 0$, we obtain an HJB-type equation for the `pre' conditional value function
\begin{equation}
    \label{HJMEq}
    \partial_{t} V^{\mathrm{pre}} + \mathcal{H}(t, V^{\mathrm{pre}}, \partial_{w} V^{\mathrm{pre}}, \partial_{w}^{2} V^{\mathrm{pre}}) = 0
\end{equation}
where
\begin{equation}
    \label{EQUATION:HJB_OPERATOR_LOGARITHMIC_UTILITY}
    \begin{split}
        \mathcal{H}(t, w, V^{\mathrm{pre}}, \partial_{w} V^{\mathrm{pre}}, &\partial_{ww} V^{\mathrm{pre}}) = \sup_{\pi_{t} < 1} \Big(\partial_{w} V^{\mathrm{pre}} (w \left(\pi_{t} \mu + (1 - \pi_{t}) r\right) \\
        &+ \tfrac{1}{2} \partial_{w}^{2} V^{\mathrm{pre}} w^{2} \pi_{t}^2 \sigma^{2}
        + \lambda \left(r(T - t) + \log(1 - \pi_{t}) + \log(w) - V^{\mathrm{pre}}\right)\Big).
    \end{split}
\end{equation}
In contrast to usual HJB equation, we can observe that our HJB operator $\mathcal{H}(\cdot)$ additionally depends on the derivatives of the value function but also the value function itself.

Using the ansatz
\begin{equation}
    \label{EQUATION:ANSATZ_SOLUTION_SWITCHING_ODE}
    V^{\mathrm{pre}}(t, w) = \log(w) + f(t),
\end{equation}
the first-order Fermat optimality condition for the function under the supremum reads as
\begin{equation}
    (\mu - r) - \pi_{t} \sigma^{2} - \frac{\lambda}{1 - \pi_{t}} = 0.
\end{equation}
Arguing similar to Appendix~\ref{APPENDIX:AUXILIARY_RESULTS}, the unique maximum is attained at
\begin{equation*}
    \pi_{\mathrm{pre}}^{\ast} = \frac{\mu - r + \sigma^2 - \sqrt{\left(\sigma^2 - (\mu - r)\right)^2 + 4\lambda\sigma^2}}{2\sigma^2}.
\end{equation*}

Plugging $\pi_{t} \equiv \pi^{\ast}_{\mathrm{pre}}$ back into Equation~\eqref{HJMEq}, we obtain an ordinary differential equation (ODE) for $f(t)$
\begin{equation}
    \label{EQUATION:PRE}
    f'(t) + \lambda \big(r(T-t)+\log(1-\pi_t)-f(t) \big)
     -\tfrac{1}{2}\pi_t^2\sigma^2 + \pi_t(\mu-r) + r =0
\end{equation}
with the terminal condition $f(T) = 0$ we can explicitly solve.
Introducing the constant
\begin{equation*}
    C= -\tfrac{1}{2} (\pi_{\mathrm{pre}}^{\ast})^{2} \sigma^2
    + \pi_{\mathrm{pre}}^{\ast}(\mu - r) + r + \lambda \log(1-\pi_{\mathrm{pre}}^{\ast}),
\end{equation*}
we rewrite Equation \eqref{EQUATION:PRE} as
\begin{equation*}
    f'(t) - \lambda f(t) + \lambda r(T-t) + C = 0
\end{equation*}
with the general solution given by
\begin{equation*}
    f(t) = B\mathrm{e}^{\lambda t} + r(T-t)+ \frac{C-r}{\lambda} \quad \text{ for some constant } B.
\end{equation*}
The terminal condition $f(T)=0$ furnishes $B = \mathrm{e}^{-\lambda T} (r - C)$ and the `pre' value function is expressed as
\begin{equation*}
    V^{\mathrm{pre}}(t, w) = \log(w) + r(T-t) + \frac{C - r}{\lambda}
    \big(1-\mathrm{e}^{-\lambda (T-t)}\big).
\end{equation*}

\subsection{Power Utility ($\gamma \neq 1$)}

For $\gamma \neq 1$, without loss of generality, the isoelastic utility in Equation~\eqref{EQUATION:ISOELASTIC_UTILITY_MERTON} can be reduced to the power utility
\begin{equation}
    \label{EQUATION:POWER_UTILITY}
    U(W) = \frac{W^{1-\gamma}}{1-\gamma}
\end{equation}
by dropping the constant term $-\frac{1}{1-\gamma}$. Similar to Section~\ref{SECTION:LOGARITHMIC_UTILITY}, the `post' conditional value function can be expressed as
\begin{equation}
    V^{\mathrm{post}}(t, w) = e^{(1-\gamma) r(T-t)} \frac{w^{1-\gamma}}{1-\gamma} (1 - \pi_{t})^{1-\gamma}.
\end{equation}
Similarly, invoking Ito's lemma, we can write
\begin{align}
    \label{ItoJumpUtility}
    \mathrm{d} V^{\mathrm{pre}}(t, w) &=
    \Big(\partial_{t} V^{\mathrm{pre}}(t, w)
    + \mathcal{H}(t, w, \pi_t, V^{\mathrm{pre}}, \partial_{w} V^{\mathrm{pre}}, \partial_{ww} V^{\mathrm{pre}})
    \Big) \mathrm{d}t \equiv 0
\end{align}
with the HJB operator
\begin{equation}
    \label{HItoJump}
    \begin{split}
        \mathcal{H}(t, w, &V^{\mathrm{pre}}, \partial_{w} V^{\mathrm{pre}}, \partial_{ww} V^{\mathrm{pre}})
        \\
        & = \sup_{\pi_{t} < 1} \Big\{w \big(\mu \pi_{t} + r (1 - \pi_{t}) \big) \partial_{w} V^{\mathrm{pre}}(t, w)
        + \frac{1}{2} \sigma^{2} \pi_{t}^2 w^{2} \partial_{w}^{2} V^{\mathrm{pre}}(t, w) + \\
        & \phantom{\sup_{\pi_{t} < 1} \Big\{} \lambda \big(V^{\mathrm{post}}(t, w) - V^{\mathrm{pre}}(t, w)\big)\Big\} \\
        &= \sup_{\pi_{t} < 1} \Big\{w \big(\mu \pi_{t} + r (1 - \pi_{t}) \big) \partial_{w} V^{\mathrm{pre}}(t, w)
        + \frac{1}{2} \sigma^{2} \pi_{t}^2 w^{2} \partial_{w}^{2} V^{\mathrm{pre}}(t, w) + \\
        & \phantom{\sup_{\pi_{t} < 1} \Big\{} \lambda \big(e^{(1-\gamma) r(T-t)} \frac{w^{1-\gamma}}{1-\gamma} (1 - \pi_{t})^{1-\gamma} - V^{\mathrm{pre}}(t, w)\big)\Big\}.
    \end{split}
\end{equation}
This leads to the HJB equation
\begin{equation}
    \label{HJMEq.POWER_UTILITY}
    \partial_{t} V^{\mathrm{pre}} + \mathcal{H}(t, V^{\mathrm{pre}}, \partial_{w} V^{\mathrm{pre}}, \partial_{w}^{2} V^{\mathrm{pre}}) = 0.
\end{equation}

Plugging the ansatz
\begin{equation}
    \label{EQUATION:ANSATZ_POWER_UTILITY}
    V^{\mathrm{pre}}(t, w) = f(t) \frac{w^{1-\gamma}}{1 - \gamma} e^{(1-\gamma) r(T-t)}
\end{equation}
into Equation~\eqref{HJMEq.POWER_UTILITY} and dividing both sides by $e^{(1-\gamma)(T-t)r} w^{1-\gamma}$, we obtain
\begin{equation}
    \label{HJMEq.POWER_UTILITY_ANSATZ_PLUGGED}
    \frac{1}{1 - \gamma} e^{-(1-\gamma)(T-t)r} \partial_{t} \big(f(t) e^{(1-\gamma) r(T-t)}\big) +
    \sup_{\pi_{t} < 1} h(t, \pi_{t}) = 0
\end{equation}
with
\begin{equation}
    \label{EQUATION:HJB_POWER_UTILITY_H_FUNCTION_REDUCED}
    \begin{split}
        h(t, \pi) &= -\frac{1}{\gamma-1}
        \Big(\lambda (1-\pi)^{1-\gamma} +
        \big(\frac{\gamma}{2} (\gamma-1) \sigma^2 \pi^2
        - (\gamma-1) (\mu-r) \pi - \lambda\big) f(t) \Big).
    \end{split}
\end{equation}

The Fermat's optimality condition for the maximization problem in Equation~\eqref{HJMEq.POWER_UTILITY_ANSATZ_PLUGGED} reads as
\begin{equation}
    \label{OptCondGC}
    \left(\mu-r - \gamma \sigma^2 \pi_t \right) f(t) - \lambda(1-\pi_t)^{-\gamma} = 0
\end{equation}
implying that $\pi_t$ and $f(t)$ are algebraically related.
Likewise, plugging the ansatz into PDE~\eqref{HJMEq.POWER_UTILITY_ANSATZ_PLUGGED} and canceling out $e^{(1 - \gamma) r(T - t)}$, we arrive at the ODE
\begin{equation}
    \label{ODEfGC}
    \frac{f'(t)}{\gamma-1} +\left(\frac{\gamma}{2}\sigma^2\pi_t^2
    -(\mu-r)\pi_t - \frac{\lambda}{\gamma-1} \right)f(t) + \frac{\lambda}{\gamma-1}(1-\pi_t)^{1-\gamma} =0.
\end{equation}

Solving Equation~\eqref{OptCondGC} for $f(t)$, we obtain
\begin{equation}
    \label{EQUATION:F_VIA_PI}
    f(t) = -\frac{\lambda(1-\pi_t)^{-\gamma}}{\gamma\sigma^2\pi_t- (\mu-r)}.
\end{equation}
Plugging this into Equation~\eqref{EQUATION:HJB_POWER_UTILITY_H_FUNCTION_REDUCED} eliminates $f(t)$ and yields the reduced form of $h$:
\begin{equation}
    \label{EQUATION:HJB_POWER_UTILITY_H_FUNCTION_REDUCED_F_ELIMINATED}
    \begin{split}
        h(t, \pi) &= -\frac{\lambda (1-\pi)^{1-\gamma}}{\gamma-1}
        \Big(1  - \frac{\frac{\gamma}{2} (\gamma-1) \sigma^2 \pi^2
        - (\gamma-1) (\mu-r) \pi - \lambda}{\gamma\sigma^2\pi - (\mu-r)} \Big).
    \end{split}
\end{equation}
Likewise, plugging the representation in Equation~\eqref{EQUATION:F_VIA_PI} into Equation~\eqref{ODEfGC} furnishes the ODE
\begin{equation}
    \label{EQUATION:ODE_FOR_PI_T}
    \frac{\mathrm{d} \pi_t}{\mathrm{d} t} = \kappa(\pi_{t}) \quad \text{ with } \quad
    \kappa(\pi) =
    \frac{\gamma \sigma^2
    (1-\pi) \big(\alpha - \pi\big)
    \big(\tfrac{1}{2} \pi^2 - \beta \pi + \eta\big)} {\pi  - \beta}
\end{equation}
with the constants
\begin{equation}
    \label{EQUATION:CONSTANTS_ALPHA_BETA_ETA}
    \alpha = \frac{\mu-r}{\gamma \sigma^2}, \quad
    \beta = \frac{\mu -r  + \sigma^2}{(\gamma+1)\sigma^2}, \quad
    \eta = \frac{\mu - r - \lambda}{\gamma (\gamma+1) \sigma^2},
\end{equation}
where $\alpha$ is the traditional Merton ratio.
Equation~\eqref{EQUATION:ODE_FOR_PI_T} is an ODE in $\pi_{t}$ instead of $f(t)$ as the conventional approach would suggest. This will allow us to solve the equation analytically.

Recalling the terminal condition $f(T) = 1$, Equation~\eqref{OptCondGC} yields the algebraic equation
\begin{equation}
    \label{piTEq}
    \phi(\pi_{T}) = 0 \quad \text{ with } \quad
    \phi(\pi_T) := \gamma\sigma^2 \big(\alpha - \pi_T\big) - \lambda(1-\pi_T)^{-\gamma}
\end{equation}
that can be solved for $\pi_{T}$. In the limiting case $\gamma = 1$, it is interesting to observe that $\phi(\pi_{T}) = 0$ reduces to $\tfrac{1}{2} \pi_T^2 - \beta \pi_T + \eta = 0$ implying $\kappa(\pi_{T}) = 0$ and, thus, $\pi_t \equiv \pi_T$ is constant consistent with Section~\ref{SECTION:LOGARITHMIC_UTILITY}.

By the virtue of Proposition~\ref{PROPOSITION:ODE_SOLUTION_PLUS_MAXIMIZER} in Appendix Section~\ref{APPENDIX:AUXILIARY_RESULTS_GENERAL_GAMMA}, the ODE in Equation~\eqref{EQUATION:ODE_FOR_PI_T} subject to the terminal condition $\phi(\pi_{T}) = 0$ is uniquely solvable such that $\pi_{t}$ exists for all $t \in [0, T]$ and satisfies $\pi_{t} = \max_{\pi < 1} h(t, \pi)$ as Equation~\eqref{HJMEq.POWER_UTILITY_ANSATZ_PLUGGED} mandates.
Moreover, $\pi_{t}$ never exceeds the classical Merton ratio or $1$, i.e.,
\begin{equation*}
    \pi_{t} \leq \frac{\mu - r}{\gamma \sigma^{2}} \quad \text{ and } \quad \pi_{t} < 1 \quad \text{ for all } t \in [0, T],
\end{equation*}
which is intuitively understandable since a more conservative strategy is naturally expected given the risk of bankruptcy.

To further solve the ODE in Equation~\eqref{EQUATION:ODE_FOR_PI_T} explicitly, we observe it is autonomous and, therefore, separable. Separating the variables and integrating, the solution $\pi_{t}$ is given implicitly via
\begin{equation}
    \Psi(\pi_t) = \Psi(\pi_T) e^{-(T-t)}
\end{equation}
with
\begin{align}
    \Psi(\pi_t) := (1-\pi_t)^{2(1-\beta) d_1}(\alpha - \pi_t)^{-2(\beta-\alpha)d_2}
    (\pi_t-q_-)^{D-(d_1-d_2)\sqrt{\Delta}}(q_+ -\pi_t)^{D+(d_1-d_2)\sqrt{\Delta}},
\end{align}
where
\begin{align*}
    d_1 &= \frac{1}{\gamma \sigma^2 (1-\alpha)(2\beta-2\eta-1)},\quad
    d_2 = \frac{1}{\gamma \sigma^2 (1-\alpha) (2\alpha \beta -\alpha^2 -2\eta)},\\
    D &= (1-\beta) d_1 + (\beta-\alpha) d_2, \quad \Delta = \beta^2 - 2\eta, \quad
    q_- = \beta - \sqrt{\Delta}, \quad q_+ = \beta + \sqrt{\Delta}.
\end{align*}
Backsubstituting $\pi_{t}$ into Equations~\eqref{EQUATION:F_VIA_PI} and \eqref{EQUATION:ANSATZ_POWER_UTILITY}, we obtain the `pre' value function $V^{\mathrm{pre}}(t, w)$  that solves the HJB in Equation~\eqref{HJMEq.POWER_UTILITY}.

The verification theorem for Markov-switching SDEs established by~\cite{AzPiWe2014} does not immediately apply to Equation~\eqref{HJMEq} or Equation~\eqref{HJMEq.POWER_UTILITY} due to the lack of pathwise continuity of solution processes to Equation~\eqref{EQUATION:SDE_WEALTH_WITH_BANKRUPTCY} as well as discontinuity of the utility function at zero. While we expect the results of~\cite{AzPiWe2014} can potentially be extended to our situation, we pursue a simpler strategy in Section~\ref{SECTION:RIGOROUS_DERIVATION} by transforming the problem to a conventional controlled Markovian diffusion, for which rigorous verification results exist in the literature.

\section{Reduction to Controlled Markovian Diffusion}
\label{SECTION:RIGOROUS_DERIVATION}

In this section, we present an alternative derivation that reduces the non-conventional optimal control problem discussed in Section~\ref{SECTION:PRELIMINARIES} to the classical case. For illustration purposes, we will solely focus on the logarithmic utility.
For the sake of simplicity, with some abuse of notation, we will write $\pi_{t}$ to denote $\pi_{\mathrm{pre}, t}$ for the most part of this section until Equation~\eqref{EQUATION:MERTON_RATIO}, where $\pi_{t}$ reclaims its original meaning (cf.~Equation~\eqref{EQUATION:STATE_DEPENDENT_CONTROL}).

Our approach is inspired by~\cite{Kli1970}. Assuming the conditions of Fubini \& Tonelli's theorem are satisfied, we can write
\begin{align*}
    \mathbb{E}_{\mathbb{P}_{W} \otimes \mathbb{P}_{\tau}}&\big[u(Y_{T}^{\pi})\big] =
    \mathbb{E}_{\mathbb{P}_{W}} \Big[\mathbb{E}_{\tau} \big[u(Y_{T}^{\pi})\big]\Big]
    = \mathbb{E}_{\mathbb{P}_{W}}\Big[\int_{0}^{\infty} u(Y_{T}^{\pi}) \varphi(t) \mathrm{d}t\Big] \\
    &= \mathbb{E}_{\mathbb{P}_{W}}\Big[\int_{0}^{\infty} \Big(u(X_{T}) \mathds{1}_{\{T < t\}} + u\big(e^{r(T - t)} (1 - \pi_{t}) X_{t}\big) \mathds{1}_{\{T \geq t\}}\Big) \varphi(t) \mathrm{d}t\Big] \\
    &= \mathbb{E}_{\mathbb{P}_{W}} \Big[\int_{T}^{\infty} u(X_{T}) \varphi(t) \mathrm{d}t\Big]
    + \mathbb{E}_{\mathbb{P}_{W}}\Big[\int_{0}^{T} u\big(e^{r(T - t)} (1 - \pi_{t}) X_{t}\big) \varphi(t) \mathrm{d}t\Big] \\
    &= \mathbb{E}_{\mathbb{P}_{W}} \Big[\Big(\int_{T}^{\infty} \varphi(t) \mathrm{d}t\Big) u(X_{T})\Big]
    + \mathbb{E}_{\mathbb{P}_{W}}\Big[\int_{0}^{T} u\big(e^{r(T - t)} (1 - \pi_{t}) X_{t}\big) \varphi(t) \mathrm{d}t\Big] \\
    &=
    \mathbb{P}_{\mathbb{P}_{\tau}}\{\tau \geq T\} \mathbb{E}_{\mathbb{P}_{W}} \big[u(X_{T})\big]
    + \mathbb{E}_{\mathbb{P}_{W}}\Big[\int_{0}^{T} u\big(e^{r(T - t)} (1 - \pi_{t}) X_{t}\big) \varphi(t) \mathrm{d}t\Big]
\end{align*}
where $\mathbb{P}_{\mathbb{P}_{\tau}}\{\tau \geq T\} = \int_{T}^{\infty} \varphi(t) \mathrm{d}t$ solely depends on $T$ with $\varphi(\cdot)$ denoting the probability density function of $\tau$. As before, we choose $u(x) = \log(x)$ to be the logarithmic utility and assume $\tau \sim \mathrm{Exp}(\lambda)$ is exponentially distributed with some known intensity $\lambda > 0$, i.e., $\varphi(t) = \lambda \exp(-\lambda t) \mathds{1}_{\{t \geq 0\}}$. Then the expected utility can be further simplified as
\begin{align}
    \label{EQUATION:STOCHASTIC_CONTROL_PROBLEM_REDUCED}
    \begin{split}
        \mathbb{E}_{\mathbb{P}_{W} \otimes \mathbb{P}_{\tau}}\big[u(Y_{T}^{\pi})\big] &=
        \mathbb{E}_{\mathbb{P}_{W}} \Big[\int_{0}^{T} \lambda e^{-\lambda t} \log\big((1 - \pi_{t}) X_{t}\big) \mathrm{d}t\Big]
         \\
        &+ \mathbb{E}_{\mathbb{P}_{W}} \big[e^{-\lambda T} \log(X_{T})\big]
        + r \lambda \int_{0}^{T} (T - t) e^{-\lambda t} \mathrm{d}t \\
        &=
        \mathbb{E}_{\mathbb{P}_{W}} \Big[\int_{0}^{T} \lambda e^{-\lambda t} \log\big((1 - \pi_{t}) X_{t}\big) \mathrm{d}t\Big] \\
        &+ \mathbb{E}_{\mathbb{P}_{W}} \Big[e^{-\lambda T} \log(X_{T}) + r\Big(T - \frac{1 - e^{-\lambda T}}{\lambda}\Big)\Big]
    \end{split}
\end{align}
where $\tau$ is completely eliminated from the latter equation.

Therefore, we were able to reduce the original control problem with sporadic bankruptcy in Equations~\eqref{EQUATION:SDE_WEALTH_WITH_BANKRUPTCY}, \eqref{EQUATION:LOGARITHMIC_UTILITY}, \eqref{EQUATION:EXPECTED_TERMINAL_UTILITY} and \eqref{EQUATION:OPTIMAL_PORTFOLIO_PROBLEM} to a classical stochastic control problem in Equations~\eqref{EQUATION:SDE_WEALTH} and \eqref{EQUATION:STOCHASTIC_CONTROL_PROBLEM_REDUCED} with running and terminal costs for wealth $W_{t}$ so that the usual Bellman's principle for Markovian diffusions can now be applied. Throughout the rest of the section, we adopt the notation of~\cite[Chapter~IV]{FleSo2006}. Note that maximization of expected utility is equivalently expressed as minimimization of negative utility within this framework. Thus, we seek to minimize the functional
\begin{equation}
    \label{EQUATION:OBJECTIVE_FUNCTIONAL_CONTROLLED_DIFFUSION}
    \begin{split}
        J(\pi) &= \mathbb{E}\Big[-\int_{0}^{T} \lambda e^{-\lambda t} \log\big((1 - \pi_{t}) X_{t}\big) \mathrm{d}t\Big] \\
        &+ \mathbb{E}\Big[-e^{-\lambda T} \log(X_{T}) - r\Big(T - \frac{1 - e^{-\lambda T}}{\lambda}\Big)\Big].
    \end{split}
\end{equation}
Since the terminal cost function in Equation~\eqref{EQUATION:OBJECTIVE_FUNCTIONAL_CONTROLLED_DIFFUSION} cannot be continuously extended to $x = 0$, classical results (viz.~\cite{FleRi1975, FleSo2006}) are not directly applicable to our problem. Instead, we will employ the verification theorem proved in the recent work of~\cite{AlPe2021}. Other known results, e.g., \cite{FoSiZa2017, GoKa2000, PuWo2015}, do not appear to be directly applicable in our situation.

The set of admissible controls is defined as
\begin{equation}
    \label{EQUATION:ADMISSIBLE_CONTROLS_X_PROCESS}
    \begin{split}
        \mathcal{A}_{t, x} := \big\{\pi \,|\, &(\pi)_{s \in [t, T]} \text{ is adapted  with } \int_{t}^{T} \pi_{s}^{2} \mathrm{d}s < \infty \; \mathbb{P}\text{-a.s.}, \\
        &(X_{s})_{s \in [t, T]} \text{ is a unique strong solution to SDE~\eqref{EQUATION:SDE_WEALTH}} , \\
        &\text{with } X_{t}^{\pi} = x \text{ and } \mathbb{E} \big[\mathop{\operatorname{ess\,sup}}_{s \in [t, T]} \big(\log(X_{s}^{\pi})\big)_{-}\big] < \infty\big\}
    \end{split}
\end{equation}
where, unlike Equation~\eqref{EQUATION:SIGMA_ALGEBRA_B_T_AND_XI_T}, the filtration is now solely generated by the Brownian motion $(B_{t})_{t}$. Arguing similar to previous sections, the last condition in the definition of $\mathcal{A}_{t, x}$ in Equation~\eqref{EQUATION:ADMISSIBLE_CONTROLS_X_PROCESS} necessitates
\begin{equation*}
    \pi_{t}(\omega) < 1 \text{ for $(\mathbb{P} \otimes \nu)$-a.e. } (\omega, t) \in \Omega \times [0, T].
\end{equation*}

Introducing the Hamilton-Jacobi-Bellman (HJB) operator
\begin{align*}
    (t, V, \pi) \mapsto \mathcal{H}(t, \pi, \partial_{x} V, \partial_{xx} V)
\end{align*}
with
\begin{align}
    \label{EQUATION:BELLMAN_OPERATOR}
    \begin{split}
        \mathcal{H}(t, x, p, A) &=
        \inf_{\pi \in \mathbb{R}} \Big(f(t, x, \pi) \cdot p + \frac{1}{2} \sigma^{2}(t, x, \pi) \cdot A - L(t, x, \pi)\Big) \\
        &= \inf_{\pi < 1} \Big(f(t, x, \pi) \cdot p + \frac{1}{2} \sigma^{2}(t, x, \pi) \cdot A - L(t, x, \pi)\Big)
    \end{split}
\end{align}
where $f(t, x, \pi) = \big(\mu \pi + r (1 - \pi)\big) x$,
$\sigma(t, x, \pi) = \sigma \pi x$,
$L(t, x, \pi) = \lambda e^{-\lambda t} \log\big((1 - \pi) x\big)$,
consider the terminal value problem for the HJB partial differential equation (PDE)
\begin{align}
    \label{EQUATION:HJB_PDE}
    \partial_{t} V(t, x) + \mathcal{H}\big(t, x, \partial_{x} V(t, x), \partial_{xx} V(t, x)\big) &= 0 \text{ for } (t, x) \in (0, T) \times (0, \infty), \\
    \label{EQUATION:HJB_TC}
    V(T, x) &= \Psi(T, x) \text{ for } x \in (0, \infty)
\end{align}
with the terminal value $\Psi(T, x) = -e^{-\lambda T} \log(x) - r\Big(T - \frac{1 - e^{-\lambda T}}{\lambda}\Big)$. On the strength of Lemma~\ref{LEMMA:HJB_OPERATOR_MINIMIZATION} in the Appendix, the HJB operator can be expressed as a fully nonlinear elliptic operator
\begin{equation}
    \mathcal{H}(t, x, p, A) = \big(\mu \pi^{\ast} + r (1 - \pi)\big) x p + \frac{1}{2} \sigma^{2} (\pi^{\ast})^{2} x^{2} - \lambda e^{-\lambda t} \log\big((1 - \pi^{\ast}) x\big) \text{ for } A > 0
\end{equation}
with $\pi^{\ast} = \pi^{\ast}(t, x, p, A)$ given in Equation~\eqref{EQUATION:OPTIMAL_PROPORTION}. For $A \leq 0$, the HJB operator is undefined since the underlying minimization problem is ill-posed.

To solve the HJB Equations~\eqref{EQUATION:HJB_PDE}--\eqref{EQUATION:HJB_TC}, we use the the ansatz
\begin{equation}
        \label{EQUATION:VALUE_FUNCTION_ANSATZ}
        V(t, x) = -e^{-\lambda t} \log(x) + G(t)
\end{equation}
with $\partial_{xx} V(t, x) = e^{-\lambda t} (1/x^{2}) > 0$ for $t \geq 0$ and $x > 0$. Plugging into Equation~\eqref{EQUATION:OPTIMAL_PROPORTION} yields the (constant) pre-default Merton ratio:
\begin{equation}
    \label{EQUATION:MERTON_RATIO}
    \pi^{\ast}_{\mathrm{pre}} \equiv \frac{(\mu - r) + \sigma^{2} - \sqrt{(\sigma^{2} - (\mu - r))^{2} + 4 \lambda \sigma^{2}}}{2 \sigma^{2}}.
\end{equation}

Like the classical Merton ratio, the Merton ratio in Equation~\eqref{EQUATION:MERTON_RATIO} neither depends on $x$, nor $t$. See~Figure~\ref{FIGURE:MERTON_RATIO}.
\begin{figure}[h!]
    \centering
    \includegraphics[width=0.65\textwidth]{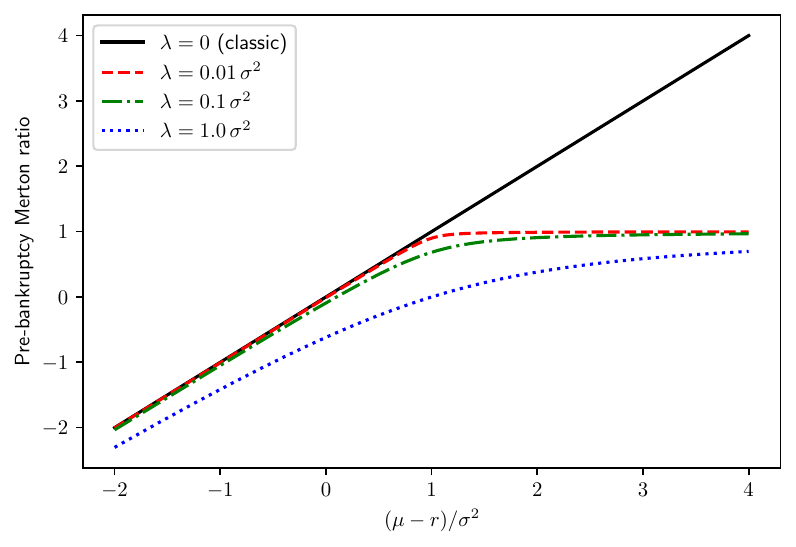}

    \caption{\label{FIGURE:MERTON_RATIO}
    Pre-default Merton ratio $\pi^{\ast}_{\mathrm{pre}}$ vs.~$\lambda$}
\end{figure}
Moreover, for $\mu - r \leq \sigma^{2}$, $\pi^{\ast}_{\mathrm{pre}}(\lambda)$ converges to the classical expression $\frac{\mu - r}{\sigma^{2}}$ as $\lambda \searrow 0$.  It is also interesting to observe that our Merton ratio in Equation~\eqref{EQUATION:MERTON_RATIO} always suggests to put less money into the stock than the usual Merton ratio would recommend. In particular, it also encourages to short the stock more aggressively, especially for larger $\lambda$, which makes practical sense in view of the chance of default. For $\mu - r < \sigma^{2}$, $\pi^{\ast}_{\mathrm{pre}}(\lambda) \nearrow 1$ as $\lambda \searrow 0$. Unlike classical situation, this means the investor should never invest all of his capital into stock or borrow from the bank to purchase stock. Indeed, since the risk of the stock price to default sporadically over any non-trivial time interval is positive, a Merton ratio $\pi \geq 1$ is not admissible in the sense of Equation~\eqref{EQUATION:ADMISSIBLE_CONTROLS_X_PROCESS} as it renders the terminal wealth $Y_{T}$ in Equation~\eqref{EQUATION:ACTUAL_WEALTH} zero or negative with a positive probability, therefore, rendering the expected utility equal $-\infty$. This is a major qualitative difference to the classical scenario.

\begin{proposition}
    The value function is given by
    \begin{equation}
        V(t, x) = -e^{-\lambda t} \log(x) -\frac{C^{\ast}(\lambda)}{\lambda} \big(e^{-\lambda t} - e^{-\lambda T}\big) - r\Big(T - \frac{1 - e^{-\lambda T}}{\lambda}\Big)
    \end{equation}
    for $t \geq 0, x > 0$ with $\pi^{\ast}_{\mathrm{pre}}$ from Equation~\eqref{EQUATION:MERTON_RATIO} and
    \begin{equation*}
        C^{\ast}(\lambda) =
        \big(\mu \pi^{\ast}_{\mathrm{pre}} + r(1 - \pi^{\ast}_{\mathrm{pre}})\big) - \frac{1}{2} \sigma^{2} (\pi^{\ast}_{\mathrm{pre}})^{2} + \lambda \log(1 - \pi^{\ast}_{\mathrm{pre}}).
    \end{equation*}
\end{proposition}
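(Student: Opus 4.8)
The plan is to verify that the proposed $V$ solves the terminal value problem~\eqref{EQUATION:HJB_PDE}--\eqref{EQUATION:HJB_TC} by direct substitution of the ansatz~\eqref{EQUATION:VALUE_FUNCTION_ANSATZ}, thereby reducing the HJB PDE to a scalar ODE for the time-dependent term $G(t)$, and then to invoke the verification theorem of~\cite{AlPe2021} to conclude that this candidate is the genuine value function and that the constant control $\pi^{\ast}_{\mathrm{pre}}$ is optimal.

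First I would compute the partial derivatives of the ansatz, namely $\partial_{t} V(t, x) = \lambda e^{-\lambda t} \log(x) + G'(t)$, $\partial_{x} V(t, x) = -e^{-\lambda t}/x$, and $\partial_{xx} V(t, x) = e^{-\lambda t}/x^{2} > 0$, so that the minimization in~\eqref{EQUATION:BELLMAN_OPERATOR} is well-posed and Lemma~\ref{LEMMA:HJB_OPERATOR_MINIMIZATION} applies. Substituting $p = \partial_{x} V$ and $A = \partial_{xx} V$ into the integrand $f \cdot p + \tfrac{1}{2} \sigma^{2} A - L$ and factoring out $e^{-\lambda t}$, the $x$-dependence collapses so that the quantity to be minimized over $\pi < 1$ becomes $e^{-\lambda t}\big(-(\mu \pi + r(1 - \pi)) + \tfrac{1}{2} \sigma^{2} \pi^{2} - \lambda \log(1 - \pi)\big) - \lambda e^{-\lambda t} \log(x)$. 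The Fermat condition $(\mu - r) - \sigma^{2} \pi - \lambda/(1 - \pi) = 0$ then selects exactly the constant minimizer $\pi^{\ast} = \pi^{\ast}_{\mathrm{pre}}$ of~\eqref{EQUATION:MERTON_RATIO}, and the bracketed expression evaluated at $\pi^{\ast}_{\mathrm{pre}}$ equals precisely $-C^{\ast}(\lambda)$.

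With this, the key cancellation is that the two $\log(x)$ contributions---the one from $\partial_{t} V$ and the one carried inside $\mathcal{H}$ through $L$---cancel identically, so that~\eqref{EQUATION:HJB_PDE} reduces to the linear ODE $G'(t) = C^{\ast}(\lambda) e^{-\lambda t}$. Integrating gives $G(t) = -\tfrac{C^{\ast}(\lambda)}{\lambda} e^{-\lambda t} + K$, and the terminal condition~\eqref{EQUATION:HJB_TC} forces $G(T) = -r\big(T - (1 - e^{-\lambda T})/\lambda\big)$ (the $\log(x)$ terms in $V(T, x)$ and $\Psi(T, x)$ matching automatically), which pins down $K$ and yields exactly the claimed expression for $V$.

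The main obstacle is the verification step, not the computation. Because both the running cost $L(t, x, \pi) = \lambda e^{-\lambda t} \log((1 - \pi) x)$ and the terminal cost $\Psi(T, x)$ contain $\log(x)$, which is unbounded below and cannot be continuously extended to $x = 0$, the classical verification theorems of~\cite{FleRi1975, FleSo2006} do not apply, which is precisely why we appeal to~\cite{AlPe2021}. The work here is to check the hypotheses of that theorem for our data: that $\pi^{\ast}_{\mathrm{pre}}$ is admissible in the sense of~\eqref{EQUATION:ADMISSIBLE_CONTROLS_X_PROCESS} (in particular $\pi^{\ast}_{\mathrm{pre}} < 1$, already established, guaranteeing $X_{t} > 0$ and finiteness of $\mathbb{E}[\operatorname{ess\,sup}_{s} (\log X_{s})_{-}]$), that the candidate $V$ possesses the requisite smoothness and growth so that the associated local martingale is a true martingale under every admissible $\pi$, and that the integrability conditions permitting passage to the limit at the terminal time are met. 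Once these are confirmed, the theorem yields $J(\pi) \geq V(0, w)$ for all admissible $\pi$ with equality attained at $\pi^{\ast}_{\mathrm{pre}}$, establishing both optimality and the stated form of the value function.
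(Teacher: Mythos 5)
Your proposal is correct and follows essentially the same route as the paper: substitute the ansatz $V(t,x) = -e^{-\lambda t}\log(x) + G(t)$ into the HJB equation, use Lemma~\ref{LEMMA:HJB_OPERATOR_MINIMIZATION} (whose minimizer at these derivatives is the constant $\pi^{\ast}_{\mathrm{pre}}$), note the cancellation of the $\log(x)$ terms, and integrate the resulting ODE $\dot G(t) = C^{\ast}(\lambda)e^{-\lambda t}$ against the terminal condition. The only differences are cosmetic: you fold the verification step (checking the hypotheses of \cite{AlPe2021}) into the proof itself, whereas the paper performs only the computation inside the proof and defers verification to the discussion following the proposition and to Proposition~\ref{PROPOSITION:FINAL}; also, your derivative signs $\partial_x V = -e^{-\lambda t}/x$ and $\partial_{xx}V = e^{-\lambda t}/x^{2} > 0$ are the correct ones, while the paper's proof displays them with flipped signs (evidently a typo, since its subsequent cancellation matches your computation).
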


\begin{proof}
    Plugging
    \begin{equation*}
        \partial_{t} V(t, x) = \lambda e^{-\lambda t} \log(x) + \dot{G}(t), \;
        \partial_{x} V(t, x) = e^{-\lambda t} (1/x), \;
        \partial_{xx} V(t, x) = -e^{-\lambda t} (1/x^{2})
    \end{equation*}
    into HJB Equation~\eqref{EQUATION:HJB_PDE}, we obtain
    \begin{equation*}
        \begin{split}
            \lambda e^{-\lambda t} \log(x) &+ \dot{G}(t) - \big(\mu \pi^{\ast}_{\mathrm{pre}} + r(1 - \pi^{\ast}_{\mathrm{pre}})\big) e^{-\lambda t} \\
            &+ \frac{1}{2} \sigma^{2} (\pi^{\ast}_{\mathrm{pre}})^{2} e^{-\lambda t} - \lambda e^{-\lambda t} \log(1 - \pi^{\ast}_{\mathrm{pre}}) - \lambda e^{-\lambda t} \log(x) = 0
        \end{split}
    \end{equation*}
    or, equivalently,
    \begin{equation}
        \label{EQUATION:VALUE_FUNCTION_G_ODE}
        \dot{G}(t) = C e^{-\lambda t} \text{ for } t \geq 0
    \end{equation}
    with
    \begin{equation}
        \label{EQUATION:VALUE_FUNCTION_CONSTANT_C_ORIGINAL}
        C = \big(\mu \pi^{\ast}_{\mathrm{pre}} + r(1 - \pi^{\ast}_{\mathrm{pre}})\big) - \frac{1}{2} \sigma^{2} (\pi^{\ast}_{\mathrm{pre}})^{2} + \lambda \log(1 - \pi^{\ast}_{\mathrm{pre}}).
    \end{equation}

    Integrating Equation~\eqref{EQUATION:VALUE_FUNCTION_G_ODE}, we obtain $G(t) = -\frac{C}{\lambda} e^{-\lambda t} + c$ for some constant $c$. On the strength of Equation~\eqref{EQUATION:HJB_TC}, we have the terminal condition
    \begin{equation*}
        -\frac{C}{\lambda} e^{-\lambda T} + c = - r\Big(T - \frac{1 - e^{-\lambda T}}{\lambda}\Big),
    \end{equation*}
    whence $c = \frac{C}{\lambda} e^{-\lambda T} - r\Big(T - \frac{1 - e^{-\lambda T}}{\lambda}\Big)$ and
    \begin{equation*}
        G(t) = \frac{C}{\lambda} \big(e^{-\lambda T} - e^{-\lambda t}\big) - r\Big(T - \frac{1 - e^{-\lambda T}}{\lambda}\Big)
    \end{equation*}
    for $t \geq 0$ and $x > 0$.
\end{proof}

Interestingly, the following observations can be made.
For $\mu - r < \sigma^{2}$, using L'H\^{o}pital's rule, $V(t, x; \lambda)$ can be shown to converge to the classical Merton value function, i.e.,
\begin{equation*}
    \lim_{\lambda \searrow 0} V(t, x; \lambda) = -\log(x) - \Big(r + \frac{(\mu - r)^{2}}{\sigma^{2}}\Big) (T - t).
\end{equation*}
In contrast, for $\mu - r >\sigma^{2}$,
\begin{equation*}
    \lim_{\lambda \searrow 0} V(t, x; \lambda) = -\log(x) - \Big(\mu - \frac{1}{2}\sigma^{2}\Big) (T - t).
\end{equation*}
since $\lim_{\lambda \searrow 0} \pi^{\ast}_{\mathrm{pre}}(\lambda) = 1$ and $\lim_{\lambda \searrow 0} C^{\ast}(\lambda) = \mu - \frac{1}{2} \sigma^{2}$.

Observing
\begin{equation*}
    V(t, x), \partial_{t} V(t, x), \partial_{xx} V(t, x) \in C^{0}([0, T] \times (0, \infty), \mathbb{R})
\end{equation*}
with
\begin{equation*}
    \partial_{xx} V(t, x) = e^{-\lambda t} x^{-2} > 0 \text{ and } \pi^{\ast}_{t} \in \mathcal{A}_{tx},
\end{equation*}
the conditions of Verification Theorem~\cite[Theorem~3]{AlPe2021} are satisfied, furnishing the following optimality result.

\begin{proposition}
    \label{PROPOSITION:FINAL}

    Let the wealth process be as in Equation~\eqref{EQUATION:ACTUAL_WEALTH}.
    \begin{itemize}
        \item The Merton ratio $\pi^{\ast}_{t} = \pi^{\ast}_{\mathrm{pre}} \mathds{1}_{\{t < \tau\}}$
        with
        $\pi^{\ast}_{\mathrm{pre}}$ given in Equation~\eqref{EQUATION:MERTON_RATIO} minimizes the negative expected utility $J(\pi)$ in Equation~\eqref{EQUATION:OBJECTIVE_FUNCTIONAL_CONTROLLED_DIFFUSION} over all admissible controls $\pi \in \mathcal{A}_{tx}$ and satisfies
        \begin{equation*}
            J(\pi^{\ast}) = \inf_{\pi \in \mathcal{A}_{tx}} J(\pi) = V(t, x).
        \end{equation*}

        \item The optimal wealth process, $W^{\ast}_{t}$, is given by Equations~\eqref{EQUATION:SDE_WEALTH}--\eqref{EQUATION:ACTUAL_WEALTH} with $\pi^{\ast}_{t}$ as above.
    \end{itemize}
\end{proposition}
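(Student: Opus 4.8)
The plan is to treat Proposition~\ref{PROPOSITION:FINAL} as a verification argument applied to the \emph{reduced} problem. By the reduction carried out in Equations~\eqref{EQUATION:STOCHASTIC_CONTROL_PROBLEM_REDUCED}--\eqref{EQUATION:OBJECTIVE_FUNCTIONAL_CONTROLLED_DIFFUSION}, the original control problem with sporadic bankruptcy has been recast as the minimization of $J(\pi)$ for the bankruptcy-free wealth process $X_{t}$ solving the classical SDE~\eqref{EQUATION:SDE_WEALTH}, with running cost $L$ and terminal cost $\Psi(T, \cdot)$; the default time $\tau$ has been integrated out entirely. It therefore suffices to exhibit an admissible feedback control attaining the infimum of this reduced functional, verify optimality via a verification theorem, and then translate the conclusion back to the original formulation.

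First I would record that the candidate value function $V(t, x)$ constructed in the preceding Proposition is a genuine classical solution of the HJB terminal value problem~\eqref{EQUATION:HJB_PDE}--\eqref{EQUATION:HJB_TC}. This was already established in its proof: the ansatz~\eqref{EQUATION:VALUE_FUNCTION_ANSATZ} reduces the PDE to the ODE~\eqref{EQUATION:VALUE_FUNCTION_G_ODE} for $G$, whose explicit solution was matched to the terminal condition~\eqref{EQUATION:HJB_TC}. Moreover, since $\partial_{xx} V(t, x) = e^{-\lambda t} x^{-2} > 0$, Lemma~\ref{LEMMA:HJB_OPERATOR_MINIMIZATION} applies and identifies the pointwise minimizer in the HJB operator~\eqref{EQUATION:BELLMAN_OPERATOR} as the constant feedback $\pi^{\ast}_{\mathrm{pre}}$ of Equation~\eqref{EQUATION:MERTON_RATIO}, depending on neither $t$ nor $x$.

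The crux of the argument is then checking that the hypotheses of the verification theorem~\cite[Theorem~3]{AlPe2021} hold, since the singularity of $\Psi(T, x)$ at $x = 0$ rules out the classical results of~\cite{FleRi1975, FleSo2006}. Three conditions must be verified: (i) the regularity $V, \partial_{t} V, \partial_{xx} V \in C^{0}([0, T] \times (0, \infty))$, immediate from the explicit closed form; (ii) the positivity $\partial_{xx} V > 0$, which is exactly the condition under which the HJB operator~\eqref{EQUATION:BELLMAN_OPERATOR} is well-defined and its minimizer is attained; and (iii) admissibility of the candidate control, $\pi^{\ast}_{t} \in \mathcal{A}_{t, x}$. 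For (iii), since $\pi^{\ast}_{\mathrm{pre}}$ is a constant in $(-\infty, 1)$ established earlier never to exceed the classical Merton ratio, the integrability $\int_{t}^{T} (\pi^{\ast}_{\mathrm{pre}})^{2} \, \mathrm{d}s < \infty$ is trivial, the resulting $X_{t}$ is a geometric Brownian motion, and the log-integrability $\mathbb{E}\big[\operatorname{ess\,sup}_{s}(\log X^{\pi^{\ast}}_{s})_{-}\big] < \infty$ follows from standard exponential-moment bounds for GBM. The hard part will be this integrability/transversality condition of~\cite{AlPe2021}: it is precisely what guarantees that the local martingale arising from It\^{o}'s formula applied to $V(s, X^{\pi}_{s})$ is a true martingale, so that the verification inequality closes to an equality for $\pi^{\ast}$ rather than leaving a gap caused by the unboundedness of the utility near $x = 0$.

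With the three conditions in place, \cite[Theorem~3]{AlPe2021} yields $J(\pi^{\ast}) = \inf_{\pi \in \mathcal{A}_{t, x}} J(\pi) = V(t, x)$ together with the optimality of the feedback $\pi^{\ast}_{\mathrm{pre}}$ for the reduced problem. Finally I would translate back to the original bankruptcy problem: by the observation around Equation~\eqref{EQUATION:STATE_DEPENDENT_CONTROL}, setting $\pi_{t} \equiv 0$ for $t \geq \tau$ is without loss of optimality, so the optimal control for the full problem is $\pi^{\ast}_{t} = \pi^{\ast}_{\mathrm{pre}} \mathds{1}_{\{t < \tau\}}$, and the corresponding optimal wealth process is recovered by substituting this control into Equation~\eqref{EQUATION:ACTUAL_WEALTH}, which completes the proof.
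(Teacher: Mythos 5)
Your proposal is correct and takes essentially the same route as the paper: the paper likewise observes that $V$, $\partial_{t} V$, $\partial_{xx} V$ are continuous on $[0, T] \times (0, \infty)$, that $\partial_{xx} V(t, x) = e^{-\lambda t} x^{-2} > 0$, and that $\pi^{\ast}_{t} \in \mathcal{A}_{tx}$, and then concludes by invoking the verification theorem~\cite[Theorem~3]{AlPe2021} for the reduced (bankruptcy-free) control problem. Your extra detail on the admissibility check for the constant control and on translating the conclusion back to the original problem via Equation~\eqref{EQUATION:ACTUAL_WEALTH} simply makes explicit what the paper leaves implicit.
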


\section{Example: Bombardier Stock Data Analysis}
\label{SECTION:EXAMPLE}

To illustrate our proposed methodology, we consider the B-rated stock of Bombardier Inc. (\texttt{BBD-B.TO}), a Canadian aerospace and transportation company, quoted by Toronto Stock Exchange. The annualized rating default probability is estimated as $\hat{\lambda} = 2.99\%$~\cite[Table 23]{SPG2022}. For simplicity, we use $\hat{\lambda} = 3\%$ in our calculations. Downloading a three years' worth of daily closing prices from~\formatdate{29}{1}{2021} to~\formatdate{29}{1}{2024} from Yahoo!~Finance~\cite{YF2024},
we compute the log-differenced returns
\begin{equation*}
    X_i = \log(S_{t_{i+1}}) - \log(S_{t_{i}}) \text{ for } i = 1, 2, \dots, n
\end{equation*}
with $n = 751$ furnishing the the annualized drift and volatility estimates (cf.~\cite{AnPo2023, PoAn2022})
\begin{equation}
    \label{EQUATION:MLE_ESTIMATES}
    \hat{\mu} = \frac{n_{\mathrm{trade}}}{n} \sum_{i=1}^{n} X_{i}, \quad
    \hat{\sigma} = \sqrt{n_{\mathrm{trade}}} \Big(\frac{1}{n - 1} \sum_{i = 1} X_{i}^{2} - \frac{n}{n - 1} \Big(\frac{1}{n} \sum_{i = 1}^{n} X_{i}\Big)^{2}\Big)^{1/2}
\end{equation}
where $n_{\mathrm{trade}} = 252$ stands for the number of trading days. See columns 1 and 2 in Table~\ref{TABLE:EXAMPLE}.

Starting with the logarithmic utility, the classical Merton's ratio $\hat{\pi}_{\mathrm{classic}}^{\ast}$ recommends to invest $101.12\%$ of capital in stock while only $74.32\%$ of capital is allocated in stock pre-default under our new ratio $\hat{\pi}_{\mathrm{pre}}^{\ast}$. In addition to offering a much less conservative investment strategy, classical Merton ratio is not even admissible (cf.~Equation~\eqref{EQUATION:ADMISSIBLE_CONTROLS_X_PROCESS}) in the presence of a sporadic bankruptcy as it renders the expected terminal utility equal negative infinity. Therefore, the difference between our framework and the classical framework of Merton is not merely quantitative, but qualitative.

\begin{table}[h!]
    \centering

    \caption{\label{TABLE:EXAMPLE}
    Estimated model parameters $\hat{\mu}$, $\hat{\sigma}$, $\hat{\lambda}$ (annualized)
    and resulting optimal allocations $\hat{\pi}_{\mathrm{classic}}^{\ast}$ (both pre- and post-default) according to Merton and $\hat{\pi}_{\mathrm{pre}}^{\ast}$ and $\hat{\pi}_{\mathrm{post}}^{\ast}$ derived in this paper.}

    \begin{tabular}{ccc|c|cc}
        \hline
        $\hat{\mu}$ & $\hat{\sigma}$ & $\hat{\lambda}$ & $\hat{\pi}_{\mathrm{classic}}^{\ast}$ & $\hat{\pi}_{\mathrm{pre}}^{\ast}$ & $\hat{\pi}_{\mathrm{post}}^{\ast}$ \\
        \hline \hline
        $40.27\%$ &  $59.05\%$ & $2.4\%$ & $101.12\%$ & $74.32\%$ & $0\%$ \\
        \hline
    \end{tabular}
\end{table}

\begin{figure}[h]
    \centering

    \begin{subfigure}[b]{0.31\textwidth}
        \includegraphics[width=\textwidth]{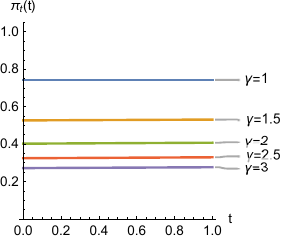}
        \caption{\label{FIGURE:MERTON_RATIO_NON_MYOPIC_T_EQUAL_1}
        Maturity $T = 1$ year.}
    \end{subfigure}
    ~
    \begin{subfigure}[b]{0.31\textwidth}
        \includegraphics[width=\textwidth]{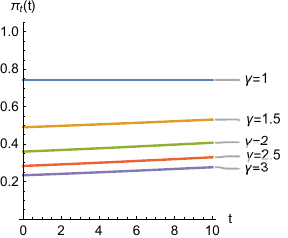}
        \caption{\label{FIGURE:MERTON_RATIO_NON_MYOPIC_T_EQUAL_10}
        Maturity $T = 10$ years.}
    \end{subfigure}
    ~
    \begin{subfigure}[b]{0.31\textwidth}
        \includegraphics[width=\textwidth]{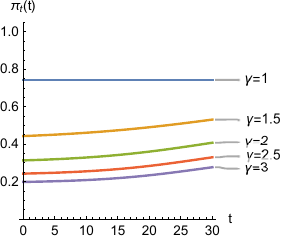}
        \caption{\label{FIGURE:MERTON_RATIO_NON_MYOPIC_T_EQUAL_30}
        Maturity $T = 30$ years.}
    \end{subfigure}

    \caption{
    \label{FIGURE:MERTON_RATIO_NON_MYOPIC}
    Myopic pre-default Merton ratio (for logarithmic utility) and non-myopic pre-default Merton ratios for various exponents $\gamma$ for $T = 1$ year (left), $T = 10$ years (center) and $T = 30$ years (right).}
\end{figure}

We now turn to isoelastic utility with $\gamma \neq 1$ and analyze the optimal allocation for the same stock scenario. Figure~\ref{FIGURE:MERTON_RATIO_NON_MYOPIC} plots the optimal pre-default stock weight $\hat{\pi}_{\mathrm{pre}, t}^{\ast}$ vs.~time $t$ for the time horizon $T = 10$ years (left panel) and $T = 30$ years (right panel) for five different $\gamma$ values ($\gamma = 1$ corresponding to the logarithmic utility). As can be seen, for $\gamma\neq 1$, the weights are non-myopic, which is more strongly pronounced for larger $\gamma$'s and $T$'s.

As the left panel of Figure~\ref{FIGURE:MERTON_RATIO_NON_MYOPIC} suggests, the optimal pre-default Merton ratio is nearly linear for shorter maturities $T$. In particular, for $T = 1$, the weight is almost constant. The general linearization equation for $\pi_{t}$ around $t = T$ can be computed as
\begin{equation*}
    \pi_{t} \approx \pi_{T} - (T - t) \frac{\mathrm{d}}{\mathrm{d}t} \pi_{t}\big|_{t = T}
    \equiv \pi_{T} - (T - t) \frac{\gamma \sigma^2
    (1 -\pi_T) \big(\alpha - \pi_T\big)
    \big(\tfrac{1}{2} \pi_T^2 - \beta \pi_T + \eta\big)} {\pi_T  - \beta}
\end{equation*}
using ODE~\eqref{EQUATION:ODE_FOR_PI_T}. The approximation is quite accurate for a wide choice of practically relevant parameter values, especially for small $T$ and $\gamma$ close to 1. Linearized equations (around $T$) for our example are provided in Table~\ref{TABLE:EXAMPLE_LINEARIZATION}.

\begin{table}[h!]
    \centering

    \caption{\label{TABLE:EXAMPLE_LINEARIZATION}
    Linearized time-dependent stock optimal stock weights for small maturities $T$.}

    \footnotesize
    \begin{tabular}{cccccc}
        \hline
        $\gamma$ & $1.0$ & $1.5$ & $2.0$ & $2.5$ & $3.0$ \\
        \hline \hline
        $\hat{\pi}^{\ast}_{\mathrm{pre}, t}$ & $0.74333$ & $0.53132 {-} 0.00449 (T {-} t)$ &
        $0.40766 {-} 0.00511 (T {-} t)$ & $0.32974 {-} 0.00489 (T {-} t)$ & $0.27657 {-} 0.00451 (T {-} t)$ \\
        \hline
    \end{tabular}
\end{table}

\section{Summary and Conclusions}

In this work, we presented an explicit solution to the optimal wealth allocation problem between a stock and a cash bond assuming the stock bankruptcy is an exogenous event with a constant intensity $\lambda$. To the best of our knowledge, neither this type of problems, nor the solution we developed for the $\log$-utility have previously appeared in the literature. The stochastic integral derivation for this type of processes appears to be new as well. Our solution has an interesting feature that one should never allocates all money in a stock that can go bankrupt to hedge against an entire loss of one's money. This seems to correlate with the investor behavior in real market. The framework we developed is not limited to bankruptcy events only. In fact, stochastic control problems with regime or security changes can be solve using a similar method. These examples include convertible bonds (i.e., bonds with a conversion feature), bonds with embedded options assuming the probability of embedded options are governed by an exogenous probability, mortgage bonds, etc. Our framework can also be applied in the context of switching between economic regimes caused by exogenous events in the market. We intend to investigate these and other scenarios in subsequent work.



\section*{Data Availability Statement}
The Bombardier Inc.~stock price dataset analyzed in Section~\ref{SECTION:EXAMPLE} is available at~\url{https://github.com/mpokojovy/MertonAbsorbing/blob/main/BOMB.csv}.

\section*{Supplemental Online Material}
The following supplementary materials are available online at: \\
\url{https://github.com/mpokojovy/MertonAbsorbing}:
\begin{enumerate}
    \item Python code to reproduce Figure~\ref{FIGURE:MERTON_RATIO}.

    \item \texttt{R} code implementing the results of Section~\ref{SECTION:EXAMPLE}.
\end{enumerate}

\appendix

\section{Auxiliary Results}
\label{APPENDIX:AUXILIARY_RESULTS}

\subsection{Results for Section~\ref{SECTION:HEURISTIC_DERIVATION}}
\label{APPENDIX:AUXILIARY_RESULTS_GENERAL_GAMMA}

\begin{lemma}
    \label{LEMMA:CONCAVITY_APPENDIX}
    The function $h(t, \pi)$ defined in Equation~\eqref{EQUATION:HJB_POWER_UTILITY_H_FUNCTION_REDUCED} is concave in $\pi$ for
    \begin{itemize}
        \item $\pi \in (-\infty, \alpha) \cup (\beta, 1)$ and $t \in [0, T]$ if $\alpha < 1$,

        \item $\pi \in (-\infty, 1)$ and $t \in [0, T]$ if $\alpha \geq 1$.
    \end{itemize}
\end{lemma}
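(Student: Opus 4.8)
The plan is to reduce the assertion to a sign analysis of $\partial_\pi^2 h$. First I would differentiate $h$ from Equation~\eqref{EQUATION:HJB_POWER_UTILITY_H_FUNCTION_REDUCED} twice with respect to $\pi$, holding $f(t)$ fixed. Using $\partial_\pi (1-\pi)^{1-\gamma} = (\gamma-1)(1-\pi)^{-\gamma}$ and noting that the quadratic bracket has constant second derivative $\gamma(\gamma-1)\sigma^2$ in $\pi$, the factor $(\gamma-1)$ produced by differentiation cancels the prefactor $-1/(\gamma-1)$ in each term, producing the compact expression
\begin{equation*}
    \partial_\pi^2 h(t, \pi) = -\gamma\big(\lambda (1-\pi)^{-\gamma-1} + \sigma^2 f(t)\big).
\end{equation*}
Since $\gamma > 0$ and $(1-\pi)^{-\gamma-1} > 0$ for $\pi < 1$, concavity ($\partial_\pi^2 h \le 0$) is equivalent to $\lambda(1-\pi)^{-\gamma-1} + \sigma^2 f(t) \ge 0$; the whole lemma thus reduces to controlling the sign of this quantity.

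Next I would express the condition through the parameters $\alpha$ and $\beta$ of Equation~\eqref{EQUATION:CONSTANTS_ALPHA_BETA_ETA} by coupling $f(t)$ to $\pi$ via the first-order relation Equation~\eqref{OptCondGC} (equivalently Equation~\eqref{EQUATION:F_VIA_PI}), which, using $(\mu-r) - \gamma\sigma^2\pi = \gamma\sigma^2(\alpha - \pi)$, reads $f(t) = \lambda(1-\pi)^{-\gamma}/\big(\gamma\sigma^2(\alpha-\pi)\big)$. Substituting and pulling out $\lambda(1-\pi)^{-\gamma-1}$, the bracket becomes $\lambda(1-\pi)^{-\gamma-1}\,[\gamma(\alpha-\pi)+(1-\pi)]/[\gamma(\alpha-\pi)]$, and the numerator simplifies to $\gamma\alpha + 1 - (\gamma+1)\pi$. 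The algebraic identity $(\gamma+1)\beta = \gamma\alpha + 1 = (\mu-r+\sigma^2)/\sigma^2$ rewrites this as $(\gamma+1)(\beta-\pi)$, so that
\begin{equation*}
    \partial_\pi^2 h(t, \pi) = -\lambda(\gamma+1)(1-\pi)^{-\gamma-1}\,\frac{\beta - \pi}{\alpha - \pi}.
\end{equation*}
As the prefactor is strictly negative for $\pi < 1$, concavity is now equivalent to the single inequality $(\beta-\pi)/(\alpha-\pi) \ge 0$.

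The last step is a case split governed by the ordering of $\alpha$, $\beta$, and $1$. I would record the identity $\beta - \alpha = \big(\gamma\sigma^2 - (\mu-r)\big)/\big(\gamma(\gamma+1)\sigma^2\big)$, which yields $\alpha < \beta \iff \alpha < 1$ and $\beta \ge 1 \iff \alpha \ge 1$. If $\alpha \ge 1$, then $\alpha, \beta \ge 1 > \pi$ for every $\pi < 1$, both factors of the ratio are positive, and $h$ is concave on all of $(-\infty, 1)$. If $\alpha < 1$, then $\alpha < \beta < 1$, and the ratio is nonnegative precisely on $(-\infty,\alpha) \cup (\beta, 1)$ (it is negative on $(\alpha,\beta)$, where $h$ is convex), reproducing the two intervals in the statement; the point $\pi = \alpha$ is excluded because $f(t)$ blows up there.

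I expect the main obstacle to be conceptual rather than computational: the substitution of $f(t)$ from Equation~\eqref{OptCondGC} is exactly what renders the concavity test meaningful at the stationary points relevant to the supremum in Equation~\eqref{HJMEq.POWER_UTILITY_ANSATZ_PLUGGED}, so I would state clearly in which sense $f(t)$ and $\pi$ are being linked and verify consistency as $\pi \to 1^-$ and in the degenerate case $\alpha = \beta = 1$. By contrast, the cancellation of the $(\gamma-1)$ factors makes the final formula uniform in $\gamma > 0$, so no separate treatment of $\gamma < 1$ versus $\gamma > 1$ is needed beyond the $\gamma = 1$ limit already handled in Section~\ref{SECTION:LOGARITHMIC_UTILITY}.
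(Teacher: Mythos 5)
Your proof is correct and follows essentially the same route as the paper: you arrive at the paper's exact second-derivative formula $\partial_\pi^2 h(t,\pi) = -(\gamma+1)\lambda(1-\pi)^{-\gamma-1}\frac{\beta-\pi}{\alpha-\pi}$ and perform the same case split based on the identity $\beta - \alpha = \frac{1-\alpha}{\gamma+1}$, i.e., on the ordering of $\alpha$, $\beta$, and $1$. The only difference is expository: you make explicit the elimination of $f(t)$ via the first-order relation~\eqref{OptCondGC}/\eqref{EQUATION:F_VIA_PI} before the sign analysis, a step the paper's proof leaves implicit behind the phrase ``Differentiation yields.''
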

\begin{proof}
    Differentiation yields
    \begin{equation*}
        \partial_{\pi}^{2} h(t, \pi) =
        -(\gamma+1) \lambda (1 - \pi)^{-\gamma-1} \frac{\pi - \beta}{\pi - \alpha}.
    \end{equation*}
    A simple calculation suggests
    $\beta - \alpha = \frac{1 - \alpha}{\gamma + 1} := \delta$ or $\beta = 1 - \gamma \delta$.
    Thus, if $\alpha = 1$, we have $\delta = 0$ and $\beta = 1$, whence $\partial_{\pi}^{2} h(t, \pi) \leq 0$.
    If $\alpha > 1$, then $\delta < 0$, and so $1 < \beta < \alpha$ so that $\partial_{\pi}^{2} h(t, \pi)$ for all $\pi < 1$.
    If $\alpha < 1$, then $\delta > 0$, and so $\alpha < \beta < 1$.
    Thus, the function $h$ is concave in $\pi$ for $\pi < \min\{\alpha, \beta\}$ or $\pi > \max\{\alpha, \beta\}$ and $t \in [0, T]$. Taking into account the relation between $\alpha$ and $\beta$, the claim follows.
\end{proof}

\begin{proposition}
    \label{PROPOSITION:ODE_SOLUTION_PLUS_MAXIMIZER}

    The ODE in Equation~\eqref{EQUATION:ODE_FOR_PI_T} subject to terminal condition in Equation~\eqref{piTEq} possesses a unique solution $\pi_{t}$ for $t \in [0, T]$ such that $\pi_{t} = \max_{\pi < 1} h(t, \pi)$. Moreover, $\pi_{t} \leq \alpha$ holds for all $t \in [0, T]$. (Recall that $\alpha$ defined in Equation~\eqref{EQUATION:CONSTANTS_ALPHA_BETA_ETA} is the classical Merton ratio.)
\end{proposition}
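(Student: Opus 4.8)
The plan is to settle four things in turn: the terminal value $\pi_T$, global solvability of the autonomous ODE on $[0,T]$, the a priori bound, and the maximizer property. First I would pin down $\pi_T$ from \eqref{piTEq}. Since $\phi'(\pi)=-\gamma\sigma^2-\lambda\gamma(1-\pi)^{-\gamma-1}<0$ on $(-\infty,1)$, with $\phi(\pi)\to+\infty$ as $\pi\to-\infty$ and $\phi(\pi)\to-\infty$ as $\pi\nearrow1$, the equation $\phi(\pi_T)=0$ has a unique root $\pi_T\in(-\infty,1)$. If $\alpha<1$ then $\phi(\alpha)=-\lambda(1-\alpha)^{-\gamma}<0$, so monotonicity of $\phi$ forces $\pi_T<\alpha$; if $\alpha\ge1$ then $\pi_T<1\le\alpha$ trivially. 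In particular $\alpha-\pi_T>0$, so $f(T)$ from \eqref{EQUATION:F_VIA_PI} is positive and indeed equals $1$, matching the terminal datum.

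Next I would study $\dot\pi=\kappa(\pi)$ from \eqref{EQUATION:ODE_FOR_PI_T}. The field $\kappa$ is smooth, hence locally Lipschitz, away from the singularity $\pi=\beta$, so Picard--Lindel\"of yields a unique maximal solution through $(T,\pi_T)$; the real work is to show it survives on all of $[0,T]$ while staying below $\alpha$. I would exploit three structural facts: $\kappa(\alpha)=\kappa(1)=0$ (from the factors $\alpha-\pi$ and $1-\pi$); the quadratic factor $Q(\pi):=\tfrac12\pi^2-\beta\pi+\eta$ obeys $Q(\pi)\to+\infty$ as $\pi\to-\infty$; and $\beta\ge\alpha$ always, since $\beta-\alpha=(1-\alpha)/(\gamma+1)$ (whence $\alpha<\beta<1$ if $\alpha<1$, $1<\beta<\alpha$ if $\alpha>1$, and $\alpha=\beta=1$ in the degenerate case, where the singularity is removable). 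Reversing time via $s=T-t$ turns the problem into $\pi'=-\kappa(\pi)$ started at $\pi_T$; because $\alpha$ (if $\alpha<1$) or $1$ (if $\alpha\ge1$) is an equilibrium lying strictly above $\pi_T$, uniqueness prevents the orbit from crossing it and, a fortiori, from reaching $\beta$. Downward escape is excluded because $Q\to+\infty$ at $-\infty$ forces $-\kappa$ to have a sign there that drives the orbit back up; consequently the orbit is monotone and trapped in a bounded interval whose endpoints are $\pi_T$ and a neighbouring equilibrium---either the root $q_-=\beta-\sqrt{\Delta}$ of $Q$ (when real and below the barrier) or the barrier $\alpha$, resp.\ $1$, itself. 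A solution confined to a compact set on which $\kappa$ is Lipschitz extends to all $t\in\mathbb{R}$, giving existence and uniqueness on $[0,T]$ together with $\pi_t<\alpha$ when $\alpha<1$ and $\pi_t<1\le\alpha$ when $\alpha\ge1$; in every case $\pi_t\le\alpha$ and $\pi_t<1$.

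Finally I would confirm that $\pi_t$ maximizes $h(t,\cdot)$ in \eqref{HJMEq.POWER_UTILITY_ANSATZ_PLUGGED}. Since $\pi_t<\alpha$, the coefficient $f(t)$ in \eqref{EQUATION:F_VIA_PI} stays strictly positive, and differentiating \eqref{EQUATION:HJB_POWER_UTILITY_H_FUNCTION_REDUCED} twice in $\pi$ with $f(t)$ held fixed gives $\partial_\pi^2 h(t,\pi)=-\lambda\gamma(1-\pi)^{-\gamma-1}-\gamma\sigma^2 f(t)<0$ for all $\pi<1$. Hence $h(t,\cdot)$ is strictly concave on $(-\infty,1)$, and as $\pi_t$ satisfies the first-order condition \eqref{OptCondGC} by construction of the ODE, it is the unique global maximizer of $h(t,\cdot)$ over $\{\pi<1\}$, complementing Lemma~\ref{LEMMA:CONCAVITY_APPENDIX}.

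The hard part will be the global-existence step. The ODE is genuinely singular at $\pi=\beta$ and grows like $\tfrac{\gamma\sigma^2}{2}\pi^3$ as $\pi\to-\infty$, so a priori the backward orbit could either hit the singularity or blow up in finite time; ruling both out rests on the delicate interplay of the equilibrium $\alpha$ (or $1$) shielding the singularity from above and $Q\to+\infty$ shielding against downward escape, which confines the orbit to an invariant interval between two consecutive equilibria. Making this airtight requires a case analysis on the sign of $\Delta=\beta^2-2\eta$ and on the position of $\alpha$ relative to $q_\pm$ and to $1$.
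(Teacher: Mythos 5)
Your proposal is correct and follows the same skeleton as the paper's proof: a monotonicity analysis of $\phi$ to get a unique $\pi_T<\min\{\alpha,1\}$, a trapping/barrier argument between consecutive zeros of $\kappa$ (with the behavior of $\kappa$ at $-\infty$ excluding downward escape) to get global existence on $[0,T]$ with $\pi_t\leq\alpha$, and concavity to conclude the maximizer property. Two differences are worth recording. First, your signs for $\phi$ are the correct ones: since $\phi'(\pi)=-\gamma\sigma^2-\lambda\gamma(1-\pi)^{-\gamma-1}<0$, the function is strictly \emph{decreasing} with $\phi\to+\infty$ as $\pi\to-\infty$ and $\phi\to-\infty$ as $\pi\nearrow1$; the paper asserts the opposite monotonicity and limits (evidently a typo), though the conclusion --- a unique root below $\alpha$ --- is unaffected. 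Second, and more substantively, your maximizer step differs from the paper's: you differentiate $h(t,\cdot)$ from Equation~\eqref{EQUATION:HJB_POWER_UTILITY_H_FUNCTION_REDUCED} with $f(t)$ \emph{held fixed}, obtaining $\partial_\pi^2 h=-\lambda\gamma(1-\pi)^{-\gamma-1}-\gamma\sigma^2 f(t)<0$ on all of $(-\infty,1)$ once $f(t)>0$ (guaranteed by $\pi_t<\alpha$ via Equation~\eqref{EQUATION:F_VIA_PI}), whereas the paper invokes Lemma~\ref{LEMMA:CONCAVITY_APPENDIX}, whose displayed second derivative $-(\gamma+1)\lambda(1-\pi)^{-\gamma-1}\frac{\pi-\beta}{\pi-\alpha}$ is in fact the fixed-$f$ second derivative evaluated along the first-order relation~\eqref{OptCondGC} (substitute $f=\lambda(1-\pi)^{-\gamma}/(\gamma\sigma^2(\alpha-\pi))$ and use $\gamma\alpha+1=(\gamma+1)\beta$ to check the two expressions agree there), and which yields concavity only on restricted intervals. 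Your version is the right object for the pointwise supremum in Equation~\eqref{HJMEq.POWER_UTILITY_ANSATZ_PLUGGED}, gives \emph{global} strict concavity, and so makes the ``unique global maximizer over $\{\pi<1\}$'' conclusion immediate; you also handle the degenerate removable singularity $\alpha=\beta=1$, which the paper passes over. One small slip: your claim ``$\beta\geq\alpha$ always'' contradicts your own (correct) trichotomy $\beta-\alpha=(1-\alpha)/(\gamma+1)$, which gives $\beta<\alpha$ for $\alpha>1$; this is harmless, since your argument only uses that the singularity $\beta$ lies above the shielding equilibrium $\min\{\alpha,1\}$, which holds in all cases.
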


\begin{proof}
    Consider the function $\phi$ defined in Equation~\eqref{piTEq}.
    On the interval $(-\infty, 1)$, the function $\phi$ is monotonically increasing and concave, and has a vertical asymptote at $\pi = 1$. Thus, there exists a unique root, denoted as $\pi^{\ast}$, of $\phi$ on this interval, since $\phi(\pi) \to -\infty$ as $\pi \to -\infty$, and $\phi(\pi) \to \infty$ as $\pi \to 1$. Moreover, $\pi^{\ast} < \alpha$ since $\phi(\pi) < 0$ for $\pi \in (\alpha, 1)$. The terminal condition $\phi(\pi_{T}) = 0$ necessitates that $\pi_{T} = \pi^{\ast}$.

    The usual barrier method implies then that $\pi_{t} \leq \alpha$ holds true on some maximum existence interval since $\kappa$ is locally Lipschitz. As long as $\pi_{T}$ is ``squeezed'' between two zeros of the right-hand side $\kappa$ of ODE~\eqref{EQUATION:ODE_FOR_PI_T}, the solution must exist for all times and stay between those two zeros. Should $\pi_{T}$ lie below the smallest zero of $\kappa$, calculating $\limsup_{\pi \to -\infty} \kappa(\pi) \leq 0$, we conclude that $\kappa(\pi_{T}) \leq 0$ and, thus, $\pi_{t}$ is monotonically decreasing. Therefore, the solution exists for all $t \leq T$ and converges to the smallest zero of $\kappa$ as $t \to -\infty$.

    Thus, we have shown the solution $\pi_{t}$ exists for all $t \in [0, T]$, is bounded by $\alpha$ and satisfies $\pi_{t} = \max_{\pi < 1} h(t, \pi)$ on the strength of Lemma~\ref{LEMMA:CONCAVITY_APPENDIX}.
\end{proof}

\subsection{Results for Section~\ref{SECTION:RIGOROUS_DERIVATION}}

\begin{lemma}
    \label{LEMMA:HJB_OPERATOR_MINIMIZATION}

    Let $g(t, x, \pi, p, A) = f(t, x, \pi) \cdot p + \frac{1}{2} \sigma^{2}(t, x, \pi) \cdot A - L(t, x, \pi)$ with
    \begin{align*}
        f(t, x, \pi) = \big(\mu \pi + r (1 - \pi)\big) x, \quad
        \sigma(t, x, \pi) = \sigma \pi x, \quad
        L(t, x, \pi) = \lambda e^{-\lambda t} \log\big((1 - \pi) x\big).
    \end{align*}
    Assuming $x > 0$ and $A > 0$,
    \begin{align*}
        \mathop{\operatorname{arg\,min}}_{\pi < 1} g(t, x, \pi, p, A) &= \pi^{\ast}(t, x, p, A), \\
        \inf_{\pi < 1} g(t, x, \pi, p, A) &= g\big(t, x, \pi^{\ast}(t, x, p, A), p, A\big)
    \end{align*}
    with {\small
    \begin{equation}
        \label{EQUATION:OPTIMAL_PROPORTION}
        \pi^{\ast}(t, x, p, A) =
        \frac{-(\mu - r) x p + \sigma^{2} x^{2} A {-} \sqrt{((\mu - r) x p + \sigma^{2} x^{2} A)^{2} + 4 \lambda e^{-\lambda t} (\sigma^{2} x^{2} A)}}{2 (\sigma^{2} x^{2} A)}.
    \end{equation}}
\end{lemma}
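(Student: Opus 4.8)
The plan is to exploit that, for fixed $(t,x,p,A)$ with $x>0$ and $A>0$, the map $\pi \mapsto g(t,x,\pi,p,A)$ is a smooth, strictly convex function on the open half-line $(-\infty,1)$, so that its minimizer is the unique solution of the first-order condition $\partial_\pi g = 0$. First I would expand
\begin{equation*}
    g(t,x,\pi,p,A) = \big((\mu-r)\pi + r\big) x p + \tfrac{1}{2}\sigma^2 \pi^2 x^2 A - \lambda e^{-\lambda t}\log(1-\pi) - \lambda e^{-\lambda t}\log(x),
\end{equation*}
noting that the term $-\lambda e^{-\lambda t}\log(x)$ is independent of $\pi$ and therefore irrelevant for the optimization. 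Differentiating gives
\begin{equation*}
    \partial_\pi g = (\mu-r) x p + \sigma^2 x^2 A\, \pi + \frac{\lambda e^{-\lambda t}}{1-\pi}, \qquad
    \partial_\pi^2 g = \sigma^2 x^2 A + \frac{\lambda e^{-\lambda t}}{(1-\pi)^2}.
\end{equation*}
Since $A>0$, $x>0$ and $\lambda e^{-\lambda t}>0$, the second derivative is strictly positive for every $\pi<1$, which establishes strict convexity and explains why the hypothesis $A>0$ is exactly what makes the minimization well-posed.

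Next I would pin down existence and uniqueness of the interior minimizer. As $\pi\to-\infty$ the quadratic term $\tfrac{1}{2}\sigma^2\pi^2 x^2 A$ dominates and drives $g\to+\infty$, while as $\pi\to 1^-$ the term $-\lambda e^{-\lambda t}\log(1-\pi)\to+\infty$; hence $g$ is coercive on $(-\infty,1)$ and attains its infimum at an interior point. By strict convexity this minimizer is unique and coincides with the unique zero of $\partial_\pi g$ on $(-\infty,1)$.

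It remains to solve $\partial_\pi g = 0$ and to identify the correct root. Writing $a=\sigma^2 x^2 A$, $b=(\mu-r)xp$ and $c=\lambda e^{-\lambda t}$, multiplying the first-order condition by $(1-\pi)\neq 0$ converts it into the quadratic $P(\pi):=a\pi^2-(a-b)\pi-(b+c)=0$. The key observation is that $P(1)=a-(a-b)-(b+c)=-c<0$; since $P$ opens upward ($a>0$), the point $\pi=1$ lies strictly between the two real roots, so exactly one root falls in $(-\infty,1)$, namely the smaller one. Applying the quadratic formula and simplifying the discriminant via the identity $(a-b)^2+4a(b+c)=(a+b)^2+4ac$ yields the smaller root
\begin{equation*}
    \pi^\ast = \frac{(a-b)-\sqrt{(a+b)^2+4ac}}{2a},
\end{equation*}
which is precisely Equation~\eqref{EQUATION:OPTIMAL_PROPORTION} after restoring $a,b,c$. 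Substituting $\pi^\ast$ back into $g$ gives the claimed value of the infimum.

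I do not anticipate a genuine obstacle here: the statement is a convex one-variable calculus problem. The only place demanding care is the root selection, and the evaluation $P(1)=-\lambda e^{-\lambda t}<0$ settles it cleanly without any case distinction on the signs of $\mu-r$ or $p$; the remaining discriminant simplification is a routine algebraic identity.
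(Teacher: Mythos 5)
Your proof is correct, and its computational core -- writing the first-order condition, clearing the factor $(1-\pi)$ to obtain a quadratic, and selecting the smaller root, which simplifies via $(a-b)^2+4a(b+c)=(a+b)^2+4ac$ to Equation~\eqref{EQUATION:OPTIMAL_PROPORTION} -- is the same as the paper's. The two justification steps, however, are handled differently, and in both places your route is tidier. For global optimality, the paper never invokes convexity: it performs a sign analysis of $\partial_\pi g$ on either side of the critical point; you instead note $\partial_\pi^2 g = \sigma^2x^2A + \lambda e^{-\lambda t}(1-\pi)^{-2}>0$ together with coercivity at both ends of $(-\infty,1)$, which packages existence, uniqueness and global minimality in one stroke and makes explicit why $A>0$ is the right hypothesis (the paper only remarks on the ill-posedness for $A\le 0$ informally, after the lemma). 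For root selection, the paper argues by a two-case analysis on the sign of $(\mu-r)xp-\sigma^2x^2A$: it identifies the roots of the $\lambda=0$ quadratic as $1$ and $-b/a$ (in your notation $a=\sigma^2x^2A$, $b=(\mu-r)xp$) and uses the fact that turning on $\lambda>0$ spreads the roots apart to conclude that $1$ remains bracketed between them. Your one-line evaluation $P(1)=-\lambda e^{-\lambda t}<0$ yields exactly this bracketing with no case distinction and no perturbation argument; it is also slightly more robust, since the paper's labeling of $-b/a$ as the smaller root in its first case is not literally accurate when $(\mu-r)xp<-\sigma^2x^2A$ (its conclusion survives only because the bracketing of $1$ holds regardless of which root is which). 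Both arguments arrive at the identical formula, so your proof is a valid, and arguably cleaner, substitute for the paper's.
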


\begin{proof}
    The function
    \begin{align*}
        g(t, x, \pi, p, A) = \big(\mu \pi + r (1 - \pi)\big) x \cdot p + \frac{1}{2} \sigma^{2} \pi^{2} x^{2} \cdot A - \lambda e^{-\lambda t} \log\big((1 - \pi) x\big)
    \end{align*}
    is smooth with respect to all of its arguments $t \geq 0$, $x > 0$, $\pi < 1$, $p \in \mathbb{R}$ and $A > 0$. Since $\lim_{\pi \nearrow 0} g(t, x, \pi, p, A) = \infty$, no minimum can be attained in a vicinity of $\pi = 1$.

    Computing the derivative
    \begin{align*}
        \partial_{\pi} g(t, x, \pi, p, A) =
        (\mu - r) x p + \sigma^{2} \pi x^{2} A + \frac{\lambda e^{-\lambda t}}{1 - \pi},
    \end{align*}
    the Fermat first-order optimality equation $\partial_{\pi} g(t, x, \pi, p, A) = 0$ reads as
    \begin{equation*}
        \sigma^{2} \pi x^{2} A + (\mu - r) x p + \frac{\lambda e^{-\lambda t}}{(1 - \pi)} = 0
    \end{equation*}
    or, multiplying with $(\pi - 1)$,
    \begin{equation}
        \label{EQUATION:QUADRATIC_POLYNOMIAL_APPENDIX}
        P(\pi; \lambda t) \equiv \big(\sigma^{2} x^{2} A\big) \pi^{2} +
        \big((\mu - r) x p - \sigma^{2} x^{2} A\big) \pi -
        \big((\mu - r) x p + \lambda e^{-\lambda t}\big) = 0.
    \end{equation}

    Solving the former quadratic equation, we obtain two roots given by {\small
    \begin{align*}
        \pi_{1, 2} &= \frac{-\big((\mu {-} r) x p {-} \sigma^{2} x^{2} A\big) {\pm} \sqrt{\big((\mu - r) x p {-} \sigma^{2} x^{2} A\big)^{2} + 4 \big(\sigma^{2} x^{2} A\big) \big((\mu - r) x p + \lambda e^{-\lambda t}\big)}}{2 \big(\sigma^{2} x^{2} A\big)} \\
        &= \frac{-\big((\mu - r) x p - \sigma^{2} x^{2} A\big) \pm \sqrt{\big((\mu - r) x p + \sigma^{2} x^{2} A\big)^{2} + \lambda e^{-\lambda t} \big(\sigma^{2} x^{2} A\big)}}{2 \big(\sigma^{2} x^{2} A\big)}.
    \end{align*}}

    In view of $A > 0$, both roots are real and satisfy $\pi_{1} \leq \pi_{2}$. Moreover, since $P(\pi; \lambda, t) > P(\pi; 0, t)$ for $\lambda > 0$ and $t \geq 0$, we observe
    \begin{equation*}
        \pi_{1}(\lambda t) < \pi_{1}(0) \quad \text{ and } \quad \pi_{2}(\lambda t) > \pi_{2}(0) \quad \text{ for } \lambda > 0 \text{ and } t \geq 0.
    \end{equation*}
    Consider the following two cases.
    \begin{itemize}
        \item Assuming $(\mu - r) xp \leq \sigma^{2} Ax^{2}$, we obtain
        \begin{equation*}
            \pi_{1}(\lambda t) < \pi_{1}(0) \equiv -\frac{\mu - r}{\sigma^{2}} \Big(\frac{p}{Ax}\Big) \leq 1 \quad \text{ and } \quad \pi_{2}(\lambda t) > \pi_{2}(0) \equiv 1
        \end{equation*}
        for $\lambda > 0$ and $t \geq 0$
        and, thus, $\partial_{\pi} g(t, x, \pi, p, A) \leq 0$ for $\pi < \pi_{1}(\lambda t)$ and $\partial_{\pi} g(t, x, \pi, p, A) \geq 0$ for $\pi_{1}(\lambda t) < \pi < 1$.

        \item Otherwise, $(\mu - r) xp > \sigma^{2} Ax^{2}$, which implies
        \begin{equation*}
            \pi_{1}(\lambda t) < \pi_{1}(0) \equiv 1 \quad \text{ and } \quad \pi_{2}(\lambda t) \geq \pi_{2}(0) \equiv -\frac{\mu - r}{\sigma^{2}} \Big(\frac{p}{Ax}\Big) > 1
        \end{equation*}
        for $\lambda > 0$ and $t \geq 0$
        and, thus, again $\partial_{\pi} g(t, x, \pi, p, A) \leq 0$ for $\pi < \pi_{1}(\lambda t)$ and $\partial_{\pi} g(t, x, \pi, p, A) \geq 0$ for $\pi_{1}(\lambda t) < \pi < 1$.
    \end{itemize}
    Therefore, in either scenario, the global minimum of $g = g(\pi)$ is attained at $\pi = \pi_{1}(\lambda t)$ for $\lambda > 0$, $t \geq 0$ and $x > 0$ assuming $A > 0$.
\end{proof}


\end{document}